\newtheorem{cons}{Construction}
\DeclareFixedFont{\ttb}{T1}{txtt}{bx}{n}{7.5} 
\DeclareFixedFont{\ttm}{T1}{txtt}{m}{n}{7.5}  
\definecolor{deepblue}{rgb}{0,0,0.5}
\definecolor{deepred}{rgb}{0.6,0,0}
\definecolor{deepgreen}{rgb}{0,0.5,0}
\newcommand\pythonstyle{\lstset{
language=Python,
basicstyle=\ttm,
morekeywords={self},              
keywordstyle=\ttb\color{deepblue},
emph={MyClass,__init__},          
emphstyle=\ttb\color{deepred},    
stringstyle=\color{deepgreen},
frame=tb,                         
showstringspaces=false
}}
\newcommand\pythoninline[1]{{\pythonstyle\lstinline!#1!}}
\begin{document}
	\title{Further Constructions of AMUBs for Non-prime power Composite Dimensions}
	\author{Ajeet Kumar and Subhamoy Maitra}
	\institute{Applied Statistics Unit, Indian Statistical Institute Kolkata\\
	\email{ajeetk52@gmail.com,
	subho@isical.ac.in}}
	\date{}
	\maketitle

\begin{abstract}
Construction of a large class of Mutually Unbiased Bases (MUBs) for non-prime power composite dimensions ($d = k\times s$) is a long standing open problem, which leads to different construction methods for the class Approximate MUBs (AMUBs) by relaxing the criterion that the absolute value of the dot product between two vectors chosen from different bases should be $\leq \frac{\beta}{\sqrt{d}}$. In this chapter, we consider a more general class of AMUBs (ARMUBs, considering the real ones too), compared to our earlier work in [Cryptography and Communications, 14(3): 527--549, 2022]. We note that the quality of AMUBs (ARMUBs) constructed using RBD$(X,A)$ with $|X|= d$, critically depends on the parameters, $|s-k|$, $\mu$ (maximum number of elements common between any pair of blocks), and the set of block sizes. We present the construction of $\mathcal{O}(\sqrt{d})$ many $\beta$-AMUBs for composite $d$ when $|s-k|< \sqrt{d}$, using RBDs having block sizes approximately $\sqrt{d}$, such that $|\braket{\psi^l_i|\psi^m_j}| \leq \frac{\beta}{\sqrt{d}}$ where $\beta = 1 + \frac{|s-k|}{2\sqrt{d}}+ \mathcal{O}(d^{-1}) \leq 2$. Moreover, if real Hadamard matrix of order $k$ or $s$ exists, then one can construct at least $N(k)+1$ (or $N(s)+1$) many $\beta$-ARMUBs for dimension $d$, with $\beta \leq 2 - \frac{|s-k|}{2\sqrt{d}}+ \mathcal{O}(d^{-1})< 2$, where $N(w)$ is the number of MOLS$(w)$. This improves and generalizes some of our previous results for ARMUBs from two points, viz., the real cases are now extended to complex ones too. The earlier efforts use some existing RBDs, whereas here we consider new instances of RBDs that provide better results. Similar to the earlier cases, the AMUBs (ARMUBs) constructed using RBDs are in general very sparse, where the sparsity $(\epsilon)$ is $1 - \mathcal{O}(d^{-\frac{1}{2}})$. 
\end{abstract}

\section{Introduction}
\label{sec:intro}
Mutually unbiased bases are structures over Hilbert spaces with various applications in several research areas like quantum tomography, state estimation, quantum key distribution, dense coding, entanglement swapping, etc. When MUBs are constructed over $\mathbb{R}^d$, we get Real MUBs.  The MUBs over $\mathbb{R}^d$ , have interesting connections with Quadratic Forms \cite{cameron1991quadratic}, Association Schemes \cite{ReMUB-AssociationSchemes,AssociationScheme-Coding}, Equi-angular Lines ,  Equi-angular Tight Frames, Fusion Frames over $\mathbb{R}^d$ \cite{ReMUB-FusionFrames}, Mutually Unbiased Real Hadamard Matrices, Bi-angular vectors over $\mathbb{R}^d$ \cite{Holzmann2010,Best2013,Best2015,Kharaghani2018} and applications in construction of Codes \cite{Calderbank1997}.  Various authors have explored connections between MUBs and geometrical objects such as {polytopes} and {projective planes} \cite{MUB-Polytopes,MUB-CompPolytopes,saniga2004mutually,MUBs-HjelmslevGeometry,GaloisUnitaries-MUB}.

There can be maximum $d+1$ MUBs in $\mathbb{C}^d$ and $d/2+1$ MUBs in $\mathbb{R}^d$, and when they exist for a  dimension $d$, we call it a complete set of MUBs. Over $\mathbb{R}^d$ MUBs are known to be non-existent when $d$ is odd, and for most of the other even $d$, there are mostly 3 Real MUBs. The known methods for constructing MUBs over $\mathbb{C}^d$ provide complete sets only when $d$ is some power of prime. In fact if $d = p_1^{n_1}p_2^{n_2}\ldots p_m^{n_m}$ then, the lower bound on number of MUBs is $p_r^{n_r}+1$ where $p_r^{n_r}$ is $\min\{{p_1^{n_1}, p_2^{n_2}, \ldots,p_m^{n_m}}\}$. Such bases are even smaller when considering the problem over the real vector space $\mathbb{R}^d$. A large number of real MUBs are non-existent for most of the dimensions \cite{MUB2}. In fact, only for $d = 4^s, s>1$, we have $d/2 + 1$ many MUBs, whereas for most of the dimensions $d$, which are not a perfect square, we have at best only $2$ Real MUBs \cite{MUB2}.

Given this, various attempts have been made to construct a large class of approximate MUBs (AMUBs) in every dimension. To emphasize that the construction is over $\mathbb{R}^d$, we call it approximate real MUBs (ARMUBs), which are available in literature \cite{ak21,ak22}. The Approximate MUBs have been defined in various manners by different authors. The cue has been taken from the two initial papers \cite{AMUB1,AMUB2} on this. Although the first one focuses on constructing an approximate SIC POVM, the definition of `approximate' has been carried over to the MUB case as well. Various mathematical meaning of Approximation which has been used for relaxing the condition on the absolute value of the dot product between two vectors say $\ket{u}, \ket{v}$, in these two papers,  are $ |\braket{\psi^l_i|\psi^m_j}| \leq \frac{1+o(1)}{\sqrt{d}}$,  $ |\braket{\psi^l_i|\psi^m_j}| \leq \frac{2+o(1)}{\sqrt{d}}$, $ |\braket{\psi^l_i|\psi^m_j}| = \mathcal{O}(\frac{log(d)}{\sqrt{d}})$,  $ |\braket{\psi^l_i|\psi^m_j}| = \mathcal{O}(\frac{1}{\sqrt[4]{d}})$, and $ |\braket{\psi^l_i|\psi^m_j}| = \mathcal{O}(\frac{1}{\sqrt{d}})$ and subsequent researchers investigating approximate MUBs have adopted them as the mathematical definition of `approximate' \cite{AMUB-MixedCharacterSum,AMUB-CharacterSum,AMUB-FrobeniusRing,AMUB-GaloisRing,ARMUB-fromComplexAMUB}.

In this direction, we have defined $\beta-$AMUBs and APMUBs. The former being AMUBs such that $ |\braket{\psi^l_i|\psi^m_j}| \leq \frac{\beta}{\sqrt{d}}$ where $\beta $ is some small real constant for all $d$ and later defined APMUBs is $\beta-$AMUB, such that $\beta = 1+ \mathcal{O}(d^{-\lambda}) \leq 2$ for $d$ where some $\lambda > 0$ is constant and the absolute value of the dot product vectors from different basis can have only one non zero value i.e  $ |\braket{\psi^l_i|\psi^m_j}| \in \{0, \frac{\beta}{\sqrt{d}} \}$. Thus, APMUBs are the subset of $\beta-$AMUBs. We will see that AMUBs constructed using RBD are generally $\beta -$ AMUBs, and basis vectors are very sparse. We have used Big $\mathcal{O}$ Notation to mean that if $f(x) = \mathcal{O}(g(x))$ then there exist a constant $c > 0$, and $x_o$, such that $f(x)\leq c g(x)\, \, \forall x > x_o$.


\subsection{Organization and contribution}
In Section \ref{sec:pre}, we begin by formally defining Mutually Unbiased Bases (MUBs) and approximate MUBs. We also define the important category of AMUBs, which we refer to as $\beta$-AMUBs, and then briefly outline the results related to AMUBs in the literature. Additionally, we provide general characteristics of these known constructions of AMUBs. Next, we present a brief background on the combinatorial design known as the Resolvable Block Design (RBD), which plays a crucial role in our construction of AMUBs.

In Section \ref{sec:TheoreticalAnalysis}, we provide a theoretical analysis of AMUBs that could be constructed using RBD and describe the important parameters of RBD that affect the quality of the constructed AMUBs using RBDs. For this, we categorize RBD into two categories: one having variable block size and another having constant block size. We first show that the sparsity of the AMUBs constructed using RBD is approximately $1- \frac{1}{\sqrt{d}}$ if block sizes are around $\sqrt{d}$. Then, we present a general theorem on AMUBs, assuming the existence of a certain kind of RBD. The block sizes and the maximum number of elements common between any pair of blocks in the RBD play a crucial role in the value of $\beta$. We demonstrate that when block sizes are around $\sqrt{d}$, we obtain very sparse $\beta$-AMUBs with $\beta = \mu +\mathcal{O}(d^{-\frac{1}{2}})$ and sparsity $\epsilon \sim 1- \frac{1}{\sqrt{d}}$, thus showing that as $d$ increases, $\beta$ approaches $\mu$ and the sparsity approaches $1$. For RBDs $(X, A)$, with $|X|= d=k\times s$ having constant block size either $k$ or $s$, we show that $\delta = \frac{|s-k|}{2}$ plays a crucial role in deciding the quality of AMUBs apart from $\mu$. For constant block sizes, we express $d=(q-e)(q\pm f)$ where $q$ is some power of a prime. After that, we provide an estimate of $e$ and $f$ using unconditional results on the gaps between primes and Cram\'er's conjecture.

Section \ref{sec:4} discusses algorithms for constructing the  RBDs for composite $d$. And then we give result about $\beta$-AMUB, which can be constructed using such RBD. In the first subsection, we give the construction for variable block sizes of RBD, but $\mu =1$, and the number of parallel classes is greater than $\sqrt{d}$. In the following subsection, we provide construction using the RBD with constant block sizes, where $\mu$ is either 1 or 2, and the number of parallel classes is $N(s)$ or $N(k)$. We show that RBDs having constant block size can be used to construct $\beta$-ARMUBs if a Hadamard matrix of order $s$ or $k$ is available. We also illustrate our constructions with examples and show how these constructions improve and generalize the previous results.

In Section \ref{sec:5}, we discuss and compare the present results with existing results. In Section \ref{sec:con}, we summarize the main ideas of this work and conclude by suggesting further research possibilities in this direction.

\section{Preliminaries}
\label{sec:pre}
Let $M_l = \left\{\ket{\psi_1^l}, \ldots, \ket{\psi_{d}^l}\right\}$ and $M_m = \left\{\ket{\psi_1^m}, \ldots, \ket{\psi_{d}^m}\right\}$ be two orthonormal bases in the $d$-dimensional vector space over $\mathbb{C}^d$ (or $\mathbb{R}^d$). These two bases will be called Mutually Unbiased if
\begin{equation}
\left|\braket{\psi_i^l | \psi_j^m}\right| = \displaystyle\frac{1}{\sqrt{d}}, \ \forall i, j \in \left \{1, 2, \ldots, d\right\}.  
\end{equation}
 
Now, let us introduce some notations from \cite{ak22}. 
A set of orthonormal bases would be denoted as $\mathbb{M} = \left\{M_1, M_2, \ldots, M_r\right\}$ (which may not necessarily be mutually unbiased bases) of dimension $d$. Here, $\Delta$ denotes the set of different inner product values between the vectors from different orthonormal bases. In other words, $\Delta$ contains the distinct values of $\left|\braket{\psi_i^l | \psi_j^m}\right|$ for all $i,j\in \left\{1, 2, \ldots, d\right\}$ and $l \neq m \in \left\{1, \ldots, r\right\}$. The set $\mathbb{M} = \left\{M_1, M_2, \ldots, M_r\right\}$ consisting of such orthonormal bases will form a set of mutually unbiased bases (MUBs) of size $r$ provided that $M_l, M_m \in \mathbb{M}$ are mutually unbiased for all $l \neq m$. It is worth noting that when $\mathbb{M}$ forms an MUB, $\Delta$ is a singleton set with the only element $\frac{1}{\sqrt{d}}$. However, there will be more than one value in the set for approximate ones. The notation $\beta$-ARMUB (Approximate Real MUB) has been used \cite{ak22} in the context of a set of orthonormal bases $\mathbb{M}$, where the maximum value in $\Delta$ is bounded by $\frac{\beta}{\sqrt{d}}$, with $\beta$ being some small real constant $\geq 1$. It is worth noting that this definition is significant for situations where $d$ can be increased and $\beta$ remains bounded by some constant. We will refer to such sets of MUBs as $\beta$-AMUB (Approximate MUB) for both complex and real cases. If we need to emphasize something specific in the context of Approximate Real MUBs, we will refer to it as $\beta$-ARMUB.

The sparsity of orthonormal bases is an essential feature of our construction. To characterize the sparsity of the MUBs/AMUBs/ARMUBs, we arrange the basis vectors as columns of a $d \times d$ matrix and use the standard measure of sparsity (denoted by $\epsilon$) as "the number of zero elements in the matrix divided by the total number of elements." The columns of the matrix consist of orthonormal basis vectors. The closer the value of $\epsilon$ is to unity, the more the number of zeros in the matrix. It will be shown that for our construction, the sparsity generally varies as $\epsilon = 1 - \mathcal{O}(d^{-\frac{1}{2}})$.

Klappenecker et al. \cite{AMUB1} for the first time, they introduced the notion of AMUB. They showed that \cite[Theorem 11]{AMUB1}, for all $d$, one can construct $d$ many bases such that
$$|\braket{\psi^l_i|\psi^m_j}| = \mathcal{O}(d^{-\frac{1}{3} }) \Rightarrow \beta = \mathcal{O}(d^{\frac{1}{6} })$$ 
and can construct $d^m, m \geq 2$ many bases such that 
$$|\braket{\psi^l_i|\psi^m_j}| = \mathcal{O}(d^{-\frac{1}{4}}) \Rightarrow \beta =  \mathcal{O}(d^{\frac{1}{4} }),$$
where $\ket{\psi_i^l}$ and $\ket{\psi_j^m}$ are basis vectors from different bases. Thus, note that $\beta$ becomes large if AMUB increases. We will see that this feature will be there in every construction, though the functional dependence would be very different. The results for all $d$ have been improved by \cite[Theorem 1]{AMUB2} where the construction based on the finite field showed that for all $d$, there are $d$ many AMUB such that
$$|\braket{\psi^l_i|\psi^m_j}| \leq \left(\frac{2}{\sqrt{\pi}}+ \mathcal{O}\left(\log^{-1} n\right)\right)\left(\frac{\log d}{ d } \right)^{\frac{1}{2}} \Rightarrow \beta =\mathcal{O}( \sqrt{ \log d }~),$$
which was further improved in construction based on the elliptic curve, \cite[Theorem 2]{AMUB2} where the construction gave $p^{m-1}$, $m\geq 2$ where $p$ is a prime such that $\sqrt{n} -1 \leq \sqrt{p} \leq \sqrt{n} +1.$
$$|\braket{\psi^l_i|\psi^m_j}| \leq \frac{2m + \mathcal{O}(d^{-\frac{1}{2} })}{\sqrt{d}} = \mathcal{O}(n^{-\frac{1}{2}}) \Rightarrow \beta = 2m + \mathcal{O}(d^{-\frac{1}{2} }).$$
The above are the only known constructions applicable for all $d$. The minimum value of $m= 2$, hence $\beta = 4 + \mathcal{O}(d^{-\frac{1}{2} })$ which would be the minimum possible value of $\beta$. For this case, we get $d$ AMUBs. Furthermore, there is no possibility of Real AMUBs using this construction as all the components of the basis vectors are products of the character of group elements and the complex numbers on unit modulus.

In pursuit of better constructions of AMUB, various authors have given different results, however, they are not generic in nature, rather for some specific kind of dimension. Some of them we summaries below. The following AMUB was defined strictly when $|\braket{\psi^l_i|\psi^m_j}|\leq \frac{1+o(1)}{\sqrt{d}}$ .

In \cite{AMUB1}, it was shown that for the dimensions of the form $d= p-1$, where $p$ is some prime, there exists $d+1$ bases such that
$$|\braket{\psi^l_i|\psi^m_j}| \leq \frac{1}{\sqrt{d}} + \mathcal{O}(d^{-1}) \Rightarrow \beta = 1+ \mathcal{O}(d^{-\frac{1}{2}}),$$
where $\ket{\psi_i^l}$ and $\ket{\psi_j^m}$ are vectors from different bases. In \cite[Theorem 3.1]{AMUB-GaloisRing}, a construction of AMUB has been given using Galois ring. The authors have constructed $q+1= \mathcal{O}(\sqrt{d})$ many AMUB for dimension $d= q(q-1)$ where $q$ is some prime-power, satisfying
$$|\braket{\psi^l_i|\psi^m_j}| \leq \frac{1}{q-1} \Rightarrow \beta = 1+ \mathcal{O}(d^{-\frac{1}{2}}).$$ 
In \cite{AMUB-CharacterSum} AMUB has been constructed using orthogonality of character sum over finite field for some prime power dimensions, i.e., $d=p^m$ over $\mathbb{C}^d$, specifically when $p=2$. However, we know that for such $d$, there exist well-known construction for the compete set MUBs. Thus, their results do not appears to be of any special interest.

In \cite{AMUB-MixedCharacterSum} the authors have given construction for $d+1$ or $d+2$ many AMUB over $\mathbb{C}^d$ when  $d= q-1$ where $q$ is some power of prime. Here, it is shown that one can get more than $d+1$ many AMUB with $\beta = 1+ \mathcal{O}(d^{-\frac{1}{2}})$, whereas the maximum possible MUBs for any $d$ is $d+1$. The method employed mixed character sum of certain special functions over the finite field. \cite[Theorem 3.2]{AMUB-MixedCharacterSum} constructed $q$ many AMUB when $d= q-1$ where 
$$|\braket{\psi^l_i|\psi^m_j}| = \frac{1+\sqrt{d}}{d}  ~~\text{or} ~~ 0\Rightarrow \beta = 1+ \mathcal{O}(d^{-\frac{1}{2}}).$$ 
Note that there is an equality in the above relationship for the value of $|\braket{\psi^l_i|\psi^m_j}|$ when the vectors are from different bases. Since $\Delta = {0, \frac{1+\sqrt{d}}{d} }$ consists of just two values, satisfying the condition of APMUB. The author similarly showed in \cite[Theorem 3.5]{AMUB-MixedCharacterSum} that $q+1$ AMUBs exist when $d= q-1$, where
$$|\braket{\psi^l_i|\psi^m_j}| = \frac{\sqrt{d+1}}{d} ~~\text{or} ~~  \frac{1}{d} \Rightarrow \beta = 1+ \mathcal{O}(d^{-\frac{1}{2}}).$$
Again, note that there is an equality in the above relationship for the value of $|\braket{\psi^l_i|\psi^m_j}|$ when vectors are from different bases. However, here $\Delta = {0, \frac{1}{d}, \frac{\sqrt{d+1}}{d} }$ is three-valued; it is not APMUB, although the AMUBs are of good quality. The authors have also given a few examples to illustrate the construction.

In \cite{AMUB-FrobeniusRing}, the authors have shown the construction of AMUBs using Gauss sums over Frobenius Rings. In \cite[Lemma 3.2]{AMUB-FrobeniusRing}, for any positive integer, one can construct $p$ many AMUBs over $\mathbb{C}^{\phi(n)}$, where $p$ is the smallest prime divisor of $n$ and $\phi(n)$ is the Euler function. Here
$$|\braket{\psi^l_i|\psi^m_j}| \leq \frac{1}{\sqrt{d}} \left(+  \frac{n - d}{\sqrt{d}(\sqrt{n} +d^\frac{1}{2})}  \right) \Rightarrow \beta = 1+ \mathcal{O}(d^{-\frac{1}{2}}).$$ 
The asymptotic form of $\beta$ is derived assuming $\phi(n) \approx \mathcal{O}(n)$. Here, note that the number of AMUBs is equal to the smallest prime divisor of $n$, which is restrictive and would be small even for $d= p^m$, for $m\geq 2$, as in such cases, complete MUBs are known.

Salient features of all these known constructions are as follows.
\begin{enumerate}
\item All of these methods are based on the some kind of mix of exponential sum and characters of an Abelian groups.

\item All the constructions produce complex AMUBs, i.e., the basis vectors over $\mathbb{C}^d$.

\item The sparsity of all the bases constructed is zero except for the computational basis, whenever they are part of set of AMUB.

\item Though there are combinatorial construction of MUBs in the literature, there is no corresponding combinatorial construction of AMUBs.

\item Most of the constructions produce good quality AMUB only for certain specific forms of dimensions like $d =p-1, q+1, q$, etc. except for \cite{AMUB2}, implying the results and constructions are not applicable for a general set-up.
\end{enumerate}

\section{Theoretical Analysis}
\label{sec:TheoreticalAnalysis}

In this section, we present a generic result dependent on the existence of a suitable RBD, achieved by appropriately categorizing RBD. After that, we explore methods to construct such RBDs. We again emphasize that in the present theoretical analysis, we assume that the points of RBD, i.e., $|X|= d$, can be increased without bound while the parameter $\mu$ remains constant. All our constructions will have this property, justifying the asymptotic analysis of the quality of AMUBs thus constructed. We categorize the analysis into two part, one where all the blocks are of constant size and the other where the blocks are not of constant size in RBD$(X,A)$ with $|X|= d$, a composite number.

\subsection{RBD$(X,A)$ with variable block size.}
In general, RBD$(X, A)$ with $|X|=d$ can have block sizes varying from 1 to $d$. Before presenting the theorem demonstrating how RBD can be used to construct high-quality AMUBs, we provide the following lemma regarding the sparsity of the orthonormal basis constructed using \cite[Construction 1]{ak22}, having different block sizes.

\begin{lemma}
\label{lem:sparsity}
Refer to \cite[Construction 1]{ak22}. If a parallel class $P_l$ of RBD$(X,A)$ has $b$ blocks of sizes $\{k^l_1, k^l_2, \ldots, k^l_b\}$, where $\sum_i k^l_i = |P_l| = |X|=d$,  then the sparsity $(\epsilon)$ of the orthonormal basis constructed using $P_l$ is:
$$ \epsilon = 1 - \frac{{k^l_1}^2+{k^l_2}^2 +\ldots {k^l_b}^2}{d^2} \leq 1- \frac{1}{b}.$$
\end{lemma}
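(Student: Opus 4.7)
The plan is to count, exactly, the number of nonzero entries in the $d\times d$ matrix whose columns are the orthonormal basis vectors produced by \cite[Construction 1]{ak22} from the parallel class $P_l$, and then apply a standard convexity inequality to obtain the upper bound on $\epsilon$. Recall that in that construction, each block $B$ of size $k$ in a parallel class contributes $k$ mutually orthogonal unit vectors whose supports are exactly the points of $B$, built from a Hadamard-type (or Fourier) matrix of order $k$ whose entries are all nonzero. Because the blocks of $P_l$ partition $X$, vectors coming from different blocks have disjoint supports, so, after permuting rows and columns so that the points of each block are contiguous, the resulting matrix is block-diagonal with one dense $k^l_i\times k^l_i$ sub-block for each $i=1,\dots,b$.

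First I would make the structural observation precise: the $i$th block of size $k^l_i$ gives rise to a $k^l_i\times k^l_i$ dense sub-matrix, contributing exactly $(k^l_i)^2$ nonzero entries, and all entries outside these sub-blocks are zero. Summing over $i$ and dividing by the total number $d^2$ of entries yields
$$
\epsilon \;=\; 1 \;-\; \frac{(k^l_1)^2+(k^l_2)^2+\cdots+(k^l_b)^2}{d^2},
$$
which is the first (equality) part of the claim.

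Next, for the inequality $\epsilon\le 1-\tfrac{1}{b}$, I would apply the Cauchy--Schwarz inequality (equivalently the quadratic mean--arithmetic mean inequality) to the $b$ positive numbers $k^l_1,\ldots,k^l_b$:
$$
\sum_{i=1}^{b}(k^l_i)^2 \;\ge\; \frac{1}{b}\Bigl(\sum_{i=1}^{b}k^l_i\Bigr)^{\!2} \;=\; \frac{d^2}{b},
$$
where the final equality uses the hypothesis $\sum_i k^l_i = d$. Substituting into the previous display gives $\epsilon\le 1-1/b$, with equality exactly when all blocks have the same size $d/b$.

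There is essentially no serious obstacle here; the only point that requires care is justifying the exact equality in the first step, since this relies on every entry of the intra-block sub-matrix produced by \cite[Construction 1]{ak22} being nonzero. This is automatic if the construction uses real Hadamard or Fourier matrices (entries $\pm 1/\sqrt{k^l_i}$ or roots of unity over $\sqrt{k^l_i}$); if a block happened to be filled with a matrix having zero entries, the first equality would weaken to "$\ge$'', but the inequality $\epsilon\le 1-1/b$ would remain valid by the same Cauchy--Schwarz step, so the lemma statement is unaffected.
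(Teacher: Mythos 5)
Your proof is correct and follows essentially the same route as the paper: count $(k^l_i)^2$ nonzero entries per block from the block-diagonal structure of the construction, then bound $\sum_i (k^l_i)^2 \ge d^2/b$ via the equality-at-uniform-sizes argument (the paper phrases this as minimization under the constraint $\sum_i k^l_i = d$ rather than invoking Cauchy--Schwarz by name, but it is the same inequality). Your remark that the exact equality presumes every entry of the intra-block unitary is nonzero is a fair point of care that the paper leaves implicit.
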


\begin{proof}
To estimate sparsity, refer the construction of an orthonormal basis using RBD$(X,A)$ as in \cite[Theorem 1]{ak22}. Each block within any parallel class, denoted as $P_l$,  consisting of $k_i$ elements, which yields $k_i$ basis vectors. Each of these basis vectors contains $k_i$ many non-zero elements and $(d-k_i)$ zeros. Consequently, a block with $k_i$ elements will contribute $k_i^2$ non-zero elements and $k_i(d-k_i)$ zero elements. Therefore, if a parallel class $P_l$ comprises $b$ blocks of sizes ${ k^l_1, k^l_2,\ldots k^l_t }$, the total number of non-zero components across all the  basis vectors is given by $\sum_i {k^l_i}^2 = {k^l_1}^2 +\ldots +{k^l_t}^2 $. The constraint $\sum_i k^l_i = |P_l| = |X|=d$ represents the total number of elements in the combinatorial design. Under this  constraint, $\sum_i {k^l_i}^2 $ is minimized when $k^l_1 = k^l_2 =\ldots k^l_b = \frac{d}{b}$, resulting in maximum sparsity giving $\sum_i {k^l_i}^2 = \frac{d^2}{b} \Rightarrow \epsilon \leq 1- \frac{1}{b}$. 
\end{proof}

When we know the bounds on the block size of the RBD but do not know the number of blocs, in such situation we can derive bounds on the sparsity using above result, we we state in following corollary 

\begin{corollary}
\label{cor:sparsity}
Refer to \cite[Construction 1]{ak22}. If a parallel class $P_l$ of RBD$(X,A)$, with $|X|= d$, has block sizes bounded below by $k_o$ and and above by $k_m$,  then the sparsity $(\epsilon)$ of the orthonormal basis constructed using $P_l$ 
$$1- \frac{k_m}{d} \leq \epsilon \leq 1- \frac{k_o}{d}.$$
\end{corollary}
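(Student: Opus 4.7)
The plan is to reduce this corollary directly to Lemma~\ref{lem:sparsity}, which already gives the exact expression $\epsilon = 1 - \frac{\sum_i (k^l_i)^2}{d^2}$. So the whole task is to obtain two-sided bounds on $\sum_i (k^l_i)^2$ under the constraints $\sum_i k^l_i = d$ and $k_o \le k^l_i \le k_m$ for every block in the parallel class.

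The key trick I would use is that when a positive quantity $k^l_i$ is bounded above (resp.\ below) by a constant, its square is bounded by that constant times the quantity itself. Concretely, from $k^l_i \le k_m$ I get $(k^l_i)^2 \le k_m \, k^l_i$; summing over $i$ and using $\sum_i k^l_i = d$ yields $\sum_i (k^l_i)^2 \le k_m d$. Plugging this into the formula from Lemma~\ref{lem:sparsity} gives $\epsilon \ge 1 - \frac{k_m}{d}$, which is the lower bound in the corollary. Symmetrically, from $k^l_i \ge k_o$ I get $(k^l_i)^2 \ge k_o \, k^l_i$, so $\sum_i (k^l_i)^2 \ge k_o d$ and $\epsilon \le 1 - \frac{k_o}{d}$, the upper bound.

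This is essentially a one-line argument on top of Lemma~\ref{lem:sparsity}, so there is no real obstacle; the only thing to be slightly careful about is that the bounds $k_o$ and $k_m$ are applied uniformly to every block in $P_l$, which is exactly the hypothesis. Note also that the bounds are attained in limiting configurations (all blocks of equal size $k_o$ or all equal $k_m$), so the inequalities are tight. No further combinatorial structure of the RBD is needed beyond the fact that the blocks of $P_l$ partition $X$, which is precisely what allows $\sum_i k^l_i = d$.
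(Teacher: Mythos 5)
Your proof is correct, and it reaches the same intermediate bounds $k_o d \leq \sum_i (k^l_i)^2 \leq k_m d$ as the paper, but by a genuinely different and in fact cleaner route. The paper argues via extremal configurations: it invokes the smoothing inequality $(x+u)^2+(y-u)^2 > x^2+y^2$ to claim the sum of squares is maximized (resp.\ minimized) when all blocks have the largest (resp.\ smallest) admissible size, and then evaluates $\frac{d}{k_m}\,k_m^2 = k_m d$ and $\frac{d}{k_o}\,k_o^2 = k_o d$. That argument implicitly assumes the extremal configurations exist, i.e.\ that $k_m$ and $k_o$ divide $d$, and is somewhat informally phrased. Your pointwise inequality $(k^l_i)^2 \leq k_m\,k^l_i$ (and its mirror) sidesteps this entirely: it needs no assumption beyond $k_o \leq k^l_i \leq k_m$ and $\sum_i k^l_i = d$, and gives the bound in one line. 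The only nitpick is your closing remark that the bounds are \emph{attained}: equality requires the corresponding divisibility condition ($k_o \mid d$ or $k_m \mid d$), so the inequalities are tight only for those $d$; but this is an aside and does not affect the validity of the proof.
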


\begin{proof}
From Lemma \ref{lem:sparsity}, the $\epsilon = 1 - \frac{{k^l_1}^2+{k^l_2}^2 +\ldots {k^l_b}^2}{d^2}$ where $\sum_i k^l_i = d$. To determine the minimum or maximum value of $\sum_i {k^l_i}^2 $, consider that if $x+y = c$ is a constant, with $x> y$, then $(x+u)^2 +(y-u)^2 > x^2+y^2$. Hence, the maximum value of $\sum_i {k^l_i}^2 $ occurs when the maximum number of $k^l_i$ is as large as possible, while the minimum occurs when maximum number of $k^l_i$ is as small as possible. But since the $k_o \leq k^l_i \leq k_m$, hence  when all the blocks are of size $k_o$, the number of blocks would be $\frac{d}{k_o}$ and when all the blocks would be size $k_o$, the number of blocks would be $\frac{d}{k_m}$. Thus  $\frac{d }{k_o} ~ k_o^2\leq \sum_i {k^l_i}^2 \leq  \frac{d}{k_m}~ k_m^2 \Rightarrow 1- \frac{k_m}{d} \leq \epsilon  \leq 1- \frac{k_o}{d}$.   

%
\end{proof}

Note that $0\leq \epsilon \leq 1- \frac{1}{d}$, where the upper bound corresponds to diagonal unitary matrix, which corresponds to the parallel class having $d$ singleton blocks and the lower bound corresponds to Unitary matrix having no zero entry in it, which corresponds to parallel class having just one block consisting of all the elements of the design. Not all RBD will be useful to the construct set of orthonormal basis as given in \cite[Construction 1]{ak22}, which will also be good quality AMUBs. In this direction, we give the following theorem, where if the block sizes are  $ \mathcal{O}(d^\frac{1}{2}) $, then we can get good quality sparse $\beta$-AMUBs, even when blocks are of different sizes. As noted previously, we call the largest number of elements common between any pair of blocks from different parallel classes the intersection number and denoted as $\mu$.                                                                                                  

\begin{theorem}
\label{thm:d}
Let $(X, A)$ be an RBD with $|X|= d$ and intersection number $\mu$, containing $r$ parallel classes, with block sizes from the set $K =\{q- m, q-m+1, \ldots ,q-m+t\}$, where $q$ is some power of a prime and $m, t \in \mathbb{N}$. Let $q = \sqrt{d} + \eta$, with $\eta \in \mathbb{R}$. If $(m-\eta) \leq \left( \frac{c-\mu}{c}\right) \sqrt{d}$, then we can construct $r$ many $\beta$-AMUBs in dimension $d$, where $\beta =(\frac{\mu}{q-m})\sqrt{d} = \mu + \frac{\mu(m-\eta)}{\sqrt{d}}+\mathcal{O}(d^{-1}) \leq c$. Additionally, the sparsity $(\epsilon)$ is bounded by  Furthermore, if real Hadamard matrices of order equal to every block size exist, then we can construct $r$ many ARMUBs with the same $\beta$ and $\epsilon$.
\end{theorem}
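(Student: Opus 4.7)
The plan is to invoke \cite[Construction 1]{ak22} directly: for each parallel class $P_l$, and for each block $B \in P_l$ of size $k$, pick a $k \times k$ unitary (a Fourier matrix in the complex case, a real Hadamard matrix of order $k$ in the real case when available) and distribute its scaled columns onto the coordinates indexed by $B$ to form $k$ orthonormal vectors supported on $B$. Since the blocks of $P_l$ partition $X$, this yields $d$ orthonormal vectors per parallel class, hence $r$ orthonormal bases $M_1,\dots,M_r$ of $\mathbb{C}^d$ (or $\mathbb{R}^d$ in the Hadamard case).

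The core computation is to bound $|\langle \psi_a^l \mid \psi_b^m\rangle|$ for vectors coming from blocks $B_i \in P_l$ and $B_j \in P_m$ with $l \neq m$. I would observe that $\psi_a^l$ is supported on $B_i$ with all non-zero entries of modulus $1/\sqrt{k_i}$, and similarly for $\psi_b^m$ on $B_j$ with modulus $1/\sqrt{k_j}$, so the inner product is a sum over $B_i \cap B_j$ of at most $|B_i \cap B_j| \leq \mu$ unit-modulus phases divided by $\sqrt{k_i k_j}$. The triangle inequality then gives
\begin{equation*}
|\langle \psi_a^l \mid \psi_b^m\rangle| \; \leq \; \frac{\mu}{\sqrt{k_i k_j}} \; \leq \; \frac{\mu}{q-m},
\end{equation*}
since every block size is at least $q-m$. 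To put this in the $\beta/\sqrt{d}$ form I set $\beta = \mu\sqrt{d}/(q-m)$; substituting $q = \sqrt{d}+\eta$ and expanding the geometric series in $(m-\eta)/\sqrt{d}$ gives the claimed $\beta = \mu + \mu(m-\eta)/\sqrt{d} + \mathcal{O}(d^{-1})$. The hypothesis $(m-\eta) \leq \frac{c-\mu}{c}\sqrt{d}$ rearranges algebraically to $\mu\sqrt{d} \leq c(q-m)$, i.e.\ $\beta \leq c$, so these are genuinely $\beta$-AMUBs with a uniform constant bound.

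For the sparsity statement I would apply Corollary \ref{cor:sparsity} with $k_o = q-m$ and $k_m = q-m+t$; the bounds $1 - k_m/d \leq \epsilon \leq 1 - k_o/d$ specialise to the desired estimate and show $\epsilon = 1 - \mathcal{O}(d^{-1/2})$ since $q - m = \Theta(\sqrt{d})$. Finally, for the real version one simply replaces each Fourier matrix by a real Hadamard matrix of the corresponding block order in Construction 1; the same support and magnitude structure holds, so the identical inner-product bound yields $r$ many $\beta$-ARMUBs with the same $\beta$ and $\epsilon$.

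The only non-routine step is the inner-product bound; everything else is algebraic manipulation and a direct appeal to earlier results. I do not expect a serious obstacle, since the intersection number $\mu$ is defined precisely to control $|B_i \cap B_j|$ and the scaling by $1/\sqrt{k_i k_j}$ is built into the unitary used on each block. The main care point is checking that the phases aggregating over $B_i \cap B_j$ cannot exceed $\mu$ in absolute sum, which follows immediately from the triangle inequality without exploiting any cancellation.
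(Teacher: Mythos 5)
Your proposal is correct and follows essentially the same route as the paper's own proof: bound the inner product by $\mu/(q-m)$ using the minimum block size, rearrange the hypothesis to get $\beta \le c$, expand $\beta$ as a geometric series in $(m-\eta)/\sqrt{d}$, and invoke Corollary~\ref{cor:sparsity} for the sparsity bounds. The only difference is that you spell out the triangle-inequality computation of the inner product explicitly, which the paper merely asserts.
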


\begin{proof}
We have $|X|=d$ and each parallel class has block size from the set $K =\{q-m, q-m+1,q-m+2,\ldots q-m+t \}$, with $\mu(\geq 1)$ being the maximum number of points being common between any two blocks from different parallel classes. Now, we obtain the maximum value of the dot product between two vectors from different bases would when the vectors are constructed from minimum block sizes. Thus $|\braket{v_1|v_2} | \leq \frac{\mu}{q-m} $ This implies, $\beta = \frac{\mu\sqrt{d}}{q-m}$ and if $\eta = q - \sqrt{d}$, then $\beta \leq c \Rightarrow (m-\eta)\leq \left( \frac{c-\mu}{c}\right) \sqrt{d}$. Since $1\leq \mu \leq c$, we have $(m-\eta) \leq \left( \frac{c-\mu}{c}\right) < 1$. Thus series expansion of $\beta$ in terms of $d$, is given by
$$\beta = \mu \left( 1 + \frac{m- \eta}{\sqrt{d}}+\frac{(m-\eta)^2}{ d} + \frac{(m+\eta)^3}{d^{\frac{3}{2}}}+\ldots \right)$$
showing that $\beta = \mu + \frac{\mu(m-\eta)}{\sqrt{d}}+\mathcal{O}(d^{-1})$.
Now to estimate the sparsity, we use Corollary \ref{cor:sparsity} above. The block size are bounded between $(q-m)$ and $(q-m+t)$, thus, 
$$1- \frac{q-m+t}{d} \leq \epsilon \leq 1- \frac{q-m}{d},$$ 
which implies $\epsilon_o - \frac{t}{d} \leq \epsilon \leq \epsilon_o$, where $\epsilon_o= 1- \frac{q-m}{d}$. And if real Hadamard matrices of order equal to every block size exist, then they can be used as unitary matrices in the  \cite[Construction 1]{ak22} to get real AMUBs with the same parameters, as the choice of Hadamard matrix does not affect the parameters $\beta$ and $\epsilon$ of the constructed AMUBs.
\end{proof}

Note that the parameter $\beta$ is independent of $t$ and depends solely on $\mu$, $m$, and $\eta$. This is because $\beta$, being an upper bound, is determined by the smallest block size of the RBD$(X, A)$ and the intersection number, whereas the sparsity depends on $m$ and $t$, as the block sizes determine it.

\subsection{RBD having constant block size}

In this section, we focus on analyzing the properties of AMUBs constructed using RBDs with a constant block size. The constant block size is essential if we want to utilize Hadamard matrices of the same order (equal to the block size) for all the blocks of the RBD. This is necessary as it ensures that all the basis components finally constructed are either zero or of a constant magnitude $(= 1/\sqrt{k})$, the normalizing factor for each basis vector. If the RBD has a constant block size (denoted as $k$), then the total number of elements, denoted as $|X| = d$, would be a multiple of $k$, thus $d = k \times s$. Now, two situations arise, one when $k \leq s$ and the other when $k > s$, corresponding to the scenarios where block sizes are less than or equal to the number of blocks in a parallel class and vice versa.

The case where $k \leq s$ has been analyzed in detail in \cite{ak23} in constructing APMUBs. For APMUBs, $\mu = 1$ was identified as a necessary condition, and a construction was provided for specific forms of $d = k \times s = (q - e)(q + f), 0 < f \leq e$, where $\mathcal{O}(\sqrt{d})$ many APMUBs could be constructed. The \cite[Theorem 1]{ak23} outlines general features of such AMUBs, concluding that for APMUB construction, one of the crucial requirements was $\mu = 1$, achievable only when $k \leq s$. This is because, when the blocks are of constant sizes, as shown in the previous chapter, $\mu \geq \lceil \frac{k}{s} \rceil$. Hence, for $\mu = 1$, it necessitates that $k \leq s$. 

The constant block size is very useful for constructing ARMUB, as it necessitates the existence of only one real Hadamard matrix of order $k$. Otherwise, a real Hadamard matrix corresponding to all the different block sizes is needed to obtain ARMUB, which is difficult, as real Hadamard matrices are only possible of order $2$ and multiples of $4$. However, if the Hadamard matrix of order $s$ is available and not $k$, we would require RBD to have a block size of order $s$. Thus, in such a situation, we need RBD to have a larger block size than the number of blocks in each parallel class. Toward this, we provide a general result, for $d= k\times s$, on AMUB, constructed using RBD, without assuming $k \leq s$, as was done in \cite[Theorem 4]{ak22}. This can also be viewed as a generalization of the result of \cite[Theorem 4]{ak22} so that a larger class of $d$ can be covered. For this, we state and prove the following theorem.

\begin{theorem}
\label{thm:d=ks}
Suppose there exists an RBD$(X, A)$ with an intersection number $\mu$, having constant block size $k$ and consisting of $r$ parallel classes, where $|X| = d = k \times s$, with $k, s \in \mathbb{N}$ and $2 \delta = (s - k)$ such that $\delta \leq \sqrt{d}$. Then, one can construct $r$ $\beta$-AMUBs in dimension $d$, where $\beta = \mu \sqrt{\frac{s}{k}} = \mu\left(1 + \frac{\delta}{\sqrt{d}} + \mathcal{O}(d^{-1}) \right) \leq (1+\sqrt{2})\mu$, and sparsity $\epsilon = 1 - \frac{1}{s}$. Furthermore, if there exists a real Hadamard matrix of order $k$, we can construct $r$ APRMUBs with the same $\beta$ and $\epsilon$ values and $\Delta = {0, \frac{1}{k}, \frac{2}{k}, \ldots, \frac{\mu}{k}}$. 
\end{theorem}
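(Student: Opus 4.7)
The plan is to invoke \cite[Construction 1]{ak22} directly: for each parallel class $P_l$ (there are $r$ of them), the $s$ blocks of size $k$ partition $X$, and I would place a $k\times k$ unitary matrix (the discrete Fourier matrix in the complex case, a real Hadamard of order $k$ in the real case) on each block, letting its columns, embedded into $\mathbb{C}^d$ with support on that block, serve as $k$ of the basis vectors. Because the blocks partition $X$, this yields a genuine orthonormal basis of $\mathbb{C}^d$ for each $P_l$. What needs to be proved is the bound on the pairwise inner products, the closed-form for $\beta$, and the sparsity.

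Next I would estimate $|\langle v_1 | v_2\rangle|$ where $v_1$ lives on a block $B_1 \in P_l$ and $v_2$ on a block $B_2 \in P_m$ with $l\neq m$. The entrywise product is supported on $B_1\cap B_2$, which by definition of the intersection number contains at most $\mu$ points, and at each such point the magnitude is $(1/\sqrt{k})(1/\sqrt{k})=1/k$. Hence $|\langle v_1|v_2\rangle|\le \mu/k$, so the AMUB parameter is $\beta = (\mu/k)\sqrt{d}=\mu\sqrt{s/k}$ since $d=ks$. The Taylor form follows from solving $k(k+2\delta)=d$ for $k$: one gets $k=\sqrt{d+\delta^2}-\delta$, and a short manipulation gives
\[
\sqrt{s/k}=\frac{\sqrt{d}}{k}=\frac{\delta+\sqrt{d+\delta^{2}}}{\sqrt{d}}=1+\frac{\delta}{\sqrt{d}}+\mathcal{O}(d^{-1}).
\]
For the upper bound, substituting the extremal value $\delta=\sqrt{d}$ into $k(k+2\delta)=d$ gives $k=(\sqrt{2}-1)\sqrt{d}$, $s=(\sqrt{2}+1)\sqrt{d}$, so $\sqrt{s/k}=\sqrt{2}+1$ and therefore $\beta\le (1+\sqrt{2})\mu$ throughout the admissible range $\delta\le\sqrt{d}$.

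For the sparsity I would simply apply Lemma \ref{lem:sparsity} with $b=s$ equal blocks of size $k$: the formula collapses to $\epsilon=1-sk^{2}/d^{2}=1-1/s$. For the real case I would observe that replacing each Fourier block by a real Hadamard of order $k$ leaves every entry equal to $\pm 1/\sqrt{k}$, so the inner product $\langle v_1|v_2\rangle$ becomes a signed sum of exactly $|B_1\cap B_2|\le\mu$ terms, each equal to $\pm 1/k$; hence $|\langle v_1|v_2\rangle|\in\{0,1/k,2/k,\ldots,\mu/k\}$, which is the claimed set $\Delta$, while $\beta$ and $\epsilon$ are unaffected since only the magnitudes matter.

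There is no substantial obstacle in the argument; the proof is essentially a careful application of the construction plus elementary algebra. The only point requiring a little care is the asymptotic step: one has to notice that $\sqrt{s/k}$ rewrites cleanly as $\sqrt{d}/k$ (because $ks=d$), which is what converts the bound $\mu/k$ on the raw inner product into the normalized form $\beta/\sqrt{d}$, and then to trace through the quadratic $k^{2}+2\delta k-d=0$ to get the $\delta/\sqrt{d}$ expansion and the sharp constant $1+\sqrt{2}$. The generalization over \cite[Theorem 4]{ak22} is that no assumption $k\le s$ is imposed, so the same calculation is carried out symmetrically in $k$ and $s$, with $\mu$ allowed to absorb the factor $\lceil k/s\rceil$ when $k>s$.
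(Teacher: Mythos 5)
Your proposal is correct and follows essentially the same route as the paper's proof: bound the inner product by $\mu/k$ via the intersection number, rewrite $\mu\sqrt{d}/k = \mu\sqrt{s/k}$, solve the quadratic $k(k+2\delta)=d$ to get the $1+\delta/\sqrt{d}+\mathcal{O}(d^{-1})$ expansion, read the sparsity off Lemma~\ref{lem:sparsity}, and substitute a real Hadamard matrix of order $k$ to get the real case with $\Delta=\{0,1/k,\ldots,\mu/k\}$. Your explicit evaluation at the extremal $\delta=\sqrt{d}$ to certify the constant $1+\sqrt{2}$ is a small but welcome addition that the paper leaves implicit.
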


\begin{proof}
We have $|X|=d= k\times s$, where each parallel class has a block size of $k$ and $\mu$ represents the maximum number of points common between blocks from different parallel classes. Consequently, $\beta = \frac{\mu \sqrt{d}}{k} = \mu \sqrt{\frac{s}{k}}$. If $2\delta = s- k$, then $d = (s- 2 \delta ) s $, and solving for $s$ and $k$, we obtain $s= \sqrt{d+\delta^2} +\delta$ and $k= \sqrt{d+\delta^2} -\delta$. Therefore, if $\beta = \mu \sqrt{\frac{s}{k}} \leq c$, it follows that $\delta \leq \left(\frac{c^2-\mu^2}{2\mu c}\right) d^\frac{1}{2}$. Assuming $\delta $ to be small and bounded, we have $\beta = \mu(1+ \frac{\delta}{\sqrt{d}} + \frac{\delta ^2}{2d} +\ldots)$.

The sparsity, as defined in Lemma \ref{lem:sparsity}, is given by $\epsilon = 1 - \frac{k}{d} = 1 - \frac{1}{s}$. If a real Hadamard matrix of order $k$ exists, we can use them in the construction to obtain $r$ many real approximate MUBs in $\mathbb{R}^d$, with the same $\beta$ and sparsity $(\epsilon)$. In the case of real approximate MUBs over $\mathbb{R}^d$, the $\Delta$ values, representing different possible absolute values of the inner product, would be $\Delta = \{0, \frac{1}{k}, \frac{2}{k}, \ldots \frac{\mu}{k} \}$. This implies that in the case of ARMUBs constructed using RBD$(X, A)$ with a constant block size $k$, $|\Delta|= \mu +1$, and since $\mu$ is usually a small positive integer, $\Delta$ constitutes a small set.
\end{proof}

The form of $d$ as $(q-e)(q+f)$ in \cite{ak23} was chosen because when $q$ is some power of a prime, using $(q^2,q,1)$-RBIBD, one could construct having $\mathcal{O}(\sqrt{d})$ many parallel classes. In \cite{ak23}, the focus was to construct RBD$(X,A)$ such that $\mu = 1$, hence the block size could only be $(q-e)$ and not $(q+f)$. Not all composite $d$ could be written in this form as it required the existence of prime power $q$, such that $\frac{k+s}{2} \leq q \leq s$. Nevertheless, expressing $d=(q-e)(q+f)$ proved beneficial in increasing the number of AMUB to $\mathcal{O}(\sqrt{d})$. In the similar direction of constructing $\mathcal{O}(\sqrt{d})$ many parallel classes, for a larger class of composite $d$, we now express composite $d$ as $(q-e)(q\pm f), 0< f \leq e$, where $q$ is some suitable prime power depending on factors of $d$ and focus on constructing RBD$(X, A)$ with $|X|=d$ and having $q$ many parallel classes, with intersection number $\mu$. The condition $0< f \leq e$ will ensure $(q-e) \leq (q\pm f)$. Also note that when $d= q^2$, i.e., corresponding to $e=f=0$, the RBD method gives $q+1$ many Orthonormal bases, which are MUBs \cite[Corollary 3 and Corollary 4]{ak22}. Hence, expressing a composite $d$ which is not some power of a prime number, as $(q-e)(q\pm f), 0< f \leq e$ for small values of $e$ and $f$, will also help in understanding how the small perturbation in $q$ affects the nature of the constructed orthonormal basis and its deviation from MUBs.   

Using the theorem above along with \cite[Theorem ?]{ak23}, and expressing $d = k \times s = (q - e)(q \pm f)$, where $0 < f \leq e$, we summarize the results, providing the values of its $\beta$ and sparsity $\epsilon$ for the AMUB constructed under various situations for RBD$(X, A)$, where $|X| = d$, and all blocks of the design are of constant size, depicted as follows.

\begin{center}
\begin{tabular}{ |c|c|c|c|c|c| } 
\hline
$d= k\times s$ & Block size &$\mu_{min}$ & $\beta$  & $\epsilon$ & $\beta/\mu$ \\
\hline
\hline 
$(q-e)(q+f)$ & $(q-e)$ & 1& $\mu \sqrt{\frac{q+f}{q-e}}$  & $1 - \frac{1}{q+f}$ & 1+ $\frac{e+f}{2\sqrt{d}} + \ldots$\\ 
\hline
$(q-e)(q+f)$ & $(q+f)$ &2& $\mu \sqrt{\frac{q-e}{q+f} }$  & $1 - \frac{1}{q-e}$ & $1-  \frac{e+f}{2\sqrt{d}} + \ldots$\\ 
\hline
$(q-e)(q-f)$ & $q-e$ & 1&$ \mu \sqrt{\frac{q-f}{q-e} }$ & $1 - \frac{1}{q- f}$ & $1+ \frac{e-f}{2\sqrt{d}} + \ldots$\\ 
\hline
$(q-e)(q-f)$ & $q-f$ &2 &$\mu \sqrt{\frac{q-e}{q-f}}$  & $1 - \frac{1}{q-e}$ & $1- \frac{e-f}{2\sqrt{d}} + \ldots$\\ 
\hline
\end{tabular}
\end{center}

We have ignored the higher-order terms in the expansion of $\beta$ as a function of $1/\sqrt{d}$. Furthermore, it is worth noting that $(e+f)$ and $(e-f)$, which appear above, are equivalent to $2 \delta = s-k$ for their respective cases. Although in the cases where $\mu_{\text{min}} = 2$, APMUBs cannot be obtained, it still qualifies as a $\beta$-AMUB, with $\beta$ bounded above by $\mu$. Therefore, for constructing high-quality $\beta$-AMUBs, it is desirable to use RBD$(X, A)$, ensuring that $\mu$ is minimized as much as possible. In this connection let us following lemma on Absolute Lower Bound on $\beta$

\begin{lemma}
Let RBD(X,A) with $|X|= d= k\times s$ having constant block size $k$, is used to construct $\beta$-AMUB where each parallel class has $s$ many blocks, then $\beta \geq \sqrt{\frac{k}{s}}$.
\end{lemma}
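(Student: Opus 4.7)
The plan is to reduce the bound on $\beta$ to a combinatorial bound on $\mu$. From Theorem \ref{thm:d=ks}, the value of $\beta$ produced by the RBD construction with constant block size $k$ and intersection number $\mu$ is $\beta = \mu\sqrt{s/k}$. So it suffices to show the combinatorial fact $\mu \geq k/s$ for any RBD$(X,A)$ with $|X|=d=ks$ whose parallel classes consist of $s$ blocks of size $k$. Substituting $\mu\geq k/s$ into $\beta=\mu\sqrt{s/k}$ gives $\beta \geq (k/s)\sqrt{s/k} = \sqrt{k/s}$, which is exactly the claim.

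To establish $\mu \geq k/s$, I would use a direct pigeonhole argument on two parallel classes. Pick any two distinct parallel classes $P_l,P_m$. Fix any block $B\in P_m$; it has $k$ points. The $s$ blocks of $P_l$ partition $X$, and in particular they partition the $k$ points of $B$ into $s$ parts whose sizes sum to $k$. Hence at least one block of $P_l$ shares $\lceil k/s \rceil$ points with $B$. Therefore the intersection number satisfies
\begin{equation*}
\mu \;\geq\; \left\lceil \frac{k}{s} \right\rceil \;\geq\; \frac{k}{s}.
\end{equation*}

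Combining the two pieces gives the lemma:
\begin{equation*}
\beta \;=\; \mu\sqrt{\tfrac{s}{k}} \;\geq\; \frac{k}{s}\cdot\sqrt{\tfrac{s}{k}} \;=\; \sqrt{\tfrac{k}{s}}.
\end{equation*}
There is no real obstacle here; the only subtlety is recognising that the bound relies on the integrality/pigeonhole step, which even after relaxing $\lceil k/s\rceil$ to $k/s$ is enough, and that Theorem \ref{thm:d=ks} already supplies the identity $\beta = \mu\sqrt{s/k}$ so no independent argument about dot products between basis vectors is needed. This also recovers the remark made just before the lemma that the regime $k\leq s$ is exactly the one where $\mu=1$ becomes feasible, since for $k>s$ the lemma forces $\mu\geq 2$ and $\beta>1$ automatically.
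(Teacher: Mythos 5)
Your proposal is correct and follows essentially the same route as the paper: the paper likewise writes $\beta=\mu\sqrt{s/k}$ and invokes $\mu\geq\lceil k/s\rceil\geq k/s$, citing an external lemma for that combinatorial fact rather than proving it. Your inline pigeonhole argument (a block of size $k$ is partitioned by the $s$ blocks of any other parallel class) is a valid self-contained substitute for that citation.
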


\begin{proof}
From \cite[Lemma 2]{ak23} for $d= k\times s$, where $k$ is the block size and $s$ is the number of blocks in a parallel class, then $\mu \geq \lceil \frac{k}{s} \rceil \geq \frac{k}{s}$. Thus, we have $\beta = \mu \sqrt{\frac{s}{k}} \geq \sqrt{\frac{k}{s}}$
\end{proof}
Note that $\beta \geq 1$ in all cases, thus this provides a better lower bound for  $\beta$ than the trivial lower bound of $1$,  in situations where the block size exceeds the number of blocks i.e., $k> s$, in the RBD$(X, A)$, having a constant block size.

We now state some facts about the gaps between two consecutive primes. In ~\cite{baker2001difference} it was shown that there is prime in interval$[x-x^\theta, x]$ for $x$ greater than sufficiently large  integer say $n_o$ where  $\theta = 0.525$. Hence, for any two consecutive prime we have, $p_{n+1} - p_n = \mathcal{O}(p_n^{0.525})$ for all prime, larger than  $n_o$.  If we define $g_n = p_{n+1} - p_n $ then the ratio $\frac{g_n}{Log(p_n)}$ is known as merit of the gap $g_n$. There are only few large values of this ratio. There is another important figure of merit for gap between the consecutive prime, is called Cram\'er - Shanks - Granville ratio based on Cram\'er conjecture \cite{cramer1936order}. It is defined as the ratio $\frac{g_n}{Log(p_n)^2}$. Shanks conjectured that this ratio will always be less than $1$, where as Granville conjectured, that the ratio will exceed $1$ or come arbitrarily close to $2/e^\gamma = 1.1229$ \cite{leclair2015asymptotic,sinha2010new}. On the other hand Firoozbakht's conjecture implies that the ratio is below $1-\frac{1}{\log(p)}$ for all primes $p \geq11$ \cite{kourbatov2015upper}. The greatest known value of this ratio is about $0.92$, after discarding the anomalously high values of the ratio for the small primes less than or equal to $7$. Thus assuming Cram\'er conjecture, there is a prime number in interval $[x, x+\varrho \log^2 x]$ for all $x \geq 7$ where currently all the  known value of $\varrho < 1 $.\\

We now state and prove the following lemma related to expressing any composite $d= k\times s = (q-e)(q\pm f)$ and examine the availability of such $q$ to express $d$ in the above form and examine the asymptotic dependence of $\beta$ on $q, e, f$, and get an estimate of these values in terms of $d$ and its factors $k$ and $s$. These will help in analyzing constructions in the next section. We state and prove the following lemmas.

\begin{lemma}
\label{lem:e+f}
If $d = k \times s$ such that $\delta = \frac{s - k}{2} \geq s^\theta$, then there exists a prime power $q$ such that $d = (q - e)(q + f)$ and $0 \leq f \leq \delta \leq e \leq 2\delta$. Here, $\theta = 0.525$ for sufficiently large $s$.
\end{lemma}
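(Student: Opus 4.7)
The plan is to recast the existence of the factorization $d=(q-e)(q+f)$ as a prime-location problem and then invoke the Baker–Harman–Pintz bound on prime gaps cited just above the lemma.

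First I would fix the factorization $d=k\times s$ and observe that, given any candidate $q$, the identity $d=(q-e)(q+f)$ with $q-e=k$ and $q+f=s$ forces $e=q-k$ and $f=s-q$. Translating the desired bounds $0\le f\le\delta$ and $\delta\le e\le 2\delta$ through these equalities gives the single interval condition $q\in\bigl[(s+k)/2,\,s\bigr]=[s-\delta,s]$. So the problem reduces to exhibiting a prime power $q$ lying in this interval of length $\delta$.

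Next I would apply the unconditional gap estimate recalled in the paragraph preceding the lemma: for $x$ larger than some absolute $n_0$, the interval $[x-x^{\theta},x]$ contains a prime, where $\theta=0.525$. By hypothesis $\delta\ge s^{\theta}$, so $[s-s^{\theta},s]\subseteq[s-\delta,s]$; therefore, once $s\ge n_0$, there is a prime $q$ in $[s-\delta,s]$. Since every prime is in particular a prime power, this $q$ is an admissible choice. Setting $e=q-k$ and $f=s-q$ yields $d=(q-e)(q+f)$, and the interval membership $q\in[s-\delta,s]$ is exactly equivalent to $0\le f\le\delta$ and $\delta\le e\le 2\delta$, completing the argument.

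There is no real obstacle here beyond bookkeeping; the only points that need a moment of care are (i) verifying that the algebraic constraints on $e,f$ collapse to a single interval for $q$ rather than two independent ones (they do, because $e+f=s-k=2\delta$ is forced once $q-e=k$ and $q+f=s$), and (ii) making explicit the threshold implicit in the phrase ``for sufficiently large $s$'', which is inherited verbatim from the Baker–Harman–Pintz statement quoted earlier. I would phrase the conclusion asymptotically, in keeping with the paper's convention that $d$ may be taken arbitrarily large while $\mu$ and related combinatorial parameters stay bounded.
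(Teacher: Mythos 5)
Your proposal is correct and follows essentially the same route as the paper: both reduce the factorization constraints to the single interval condition $\frac{s+k}{2}\leq q\leq s$ (equivalently $q\in[s-\delta,s]$) and then invoke the Baker--Harman--Pintz result guaranteeing a prime in $[s-s^{\theta},s]$, which lies inside that interval since $\delta\geq s^{\theta}$. Your write-up is, if anything, slightly more careful in noting explicitly that the two bounds on $e$ and $f$ collapse to one interval for $q$ and that a prime suffices where a prime power is requested.
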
 
 
\begin{proof}
We are looking for $q$ a power of prime, between $k$ and $s$ such that $d= (q-e)(q+f)$ with $0\leq f\leq e \Rightarrow s-q \leq q-k \Rightarrow \frac{s+k}{2} \leq q$. Hence $ \frac{s+k}{2} \leq q \leq s$. The unconditional result on the gaps in the prime \cite{baker2001difference} implies the existence of a prime number in the interval $[s-s^\theta, s]$ for sufficiently large $s$. The current known value of $\theta = 0.525$. Applying this, we get $s-\frac{s+k}{2} =  \delta \geq s^\theta $ as a sufficient condition for the existence of such prime power $q$.  Since $\frac{k+s}{2} \leq q \leq s$, hence $\delta = \frac{k+s}{2} -k \leq e = q-k\leq s-k = 2\delta$ and $0\leq f= s-q \leq s-\frac{k+s}{2} = \delta $. Thus $0 \leq f \leq \delta \leq e \leq 2\delta$ such that $\frac{e+f}{2} =\delta$.
\end{proof}

Now assuming Cram\'er conjecture \cite{cramer1936order} on the gap in prime number in terms of Cram\'er - Granville - Shanks ratio, which is less than 1, we have $\delta \geq \varrho \log^2 s$ as a sufficient condition for the existence of such a prime power $q$. Therefore, if it is impossible to find such prime between $k$ and $s$ only in situation  when there is no 'sufficient' gap between them. In such a situation, we find $q$ closest to $s$ but greater than $s$ and express $d = (q - e)(q - f)$. In this direction, we have the following lemma.

\begin{lemma}
\label{lem:e-f}
Let $d= k\times s$, and there is no prime power between $k$ and $s$, then $d= \mathcal{O}(s^2)$ and we can express $d= (q-e)(q-f)$, with $q$ being some prime power greater than $s$, such that $f= \mathcal{O}(d^{\frac{\theta}{2}})$ and $e = 2\delta + \mathcal{O}(d^{\frac{\theta}{2}}) $ where $\delta = \frac{s-k}{2}$ and $\theta = 0.525$.
\end{lemma}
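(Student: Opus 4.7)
The plan is to derive both conclusions by invoking the Baker--Harman--Pintz prime-gap result from \cite{baker2001difference} (recalled just before the lemma) in two places: first to show that the absence of a prime power in $[k,s]$ forces $s-k$ to be small, and then to produce a prime (hence prime power) just above~$s$.

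First I would establish $d = \mathcal{O}(s^2)$. Since $k \leq s$, trivially $d = ks \leq s^2$. For a matching lower bound on $s$, recall that Lemma~\ref{lem:e+f} guarantees a prime power in $[(s+k)/2, s]\subseteq [k,s]$ whenever $\delta = (s-k)/2 \geq s^\theta$. Contrapositively, the hypothesis of the present lemma forces $\delta < s^\theta$ for sufficiently large $s$, so $s-k < 2s^\theta$. This yields $d = ks > s(s - 2s^\theta) = s^2(1 - 2s^{\theta-1})$, and hence $s^2 = d\bigl(1+o(1)\bigr)$, so $s = \Theta(\sqrt{d})$ and in particular $s^\theta = \Theta(d^{\theta/2})$.

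Next I would produce the prime power $q>s$. Apply Baker--Harman--Pintz at $x = s + 2 s^\theta$: for large $s$ there is a prime $p \in [x - x^\theta, x]$. Since $(s + 2 s^\theta)^\theta = s^\theta (1 + 2s^{\theta-1})^\theta \to s^\theta$ as $s\to\infty$, we have $x^\theta < 2 s^\theta$ for large $s$, so the lower endpoint satisfies $x - x^\theta > s$, giving $p > s$. Set $q = p$; then $q$ is a prime power with $s < q \leq s + 2 s^\theta$. (Note that $s$ itself cannot be a prime power by hypothesis, which is consistent with the strict inequality.) Now put $e = q-k$ and $f = q-s$; then $(q-e)(q-f) = ks = d$, which is the desired factorization. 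Since $f = q - s \leq 2 s^\theta = \mathcal{O}(d^{\theta/2})$, and $e = q - k = (q-s) + (s-k) = f + 2\delta$, we conclude $e - 2\delta = f = \mathcal{O}(d^{\theta/2})$, as claimed.

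The main obstacle is the asymmetry in applying \cite{baker2001difference}: the quoted form gives a prime just \emph{below} $x$, whereas we need one strictly \emph{above} $s$. The fix is the bootstrap choice $x = s + 2 s^\theta$ combined with the asymptotic $x^\theta = s^\theta(1 + o(1))$, which forces the interval $[x - x^\theta, x]$ to sit inside $(s,\, s+2s^\theta]$. A naive choice such as $x = s + s^\theta$ would leave the lower endpoint below $s$ and fail to guarantee $p > s$, so the small multiplicative slack in the shift is essential.
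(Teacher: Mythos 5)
Your proof is correct and follows essentially the same route as the paper: use the Baker--Harman--Pintz gap result contrapositively to force $\delta < s^\theta$ (hence $s = \Theta(\sqrt d)$ and $d = \mathcal{O}(s^2)$), then take the smallest prime power $q$ just above $s$ and set $e = q-k$, $f = q-s$. Your explicit application of the gap result at $x = s + 2s^\theta$ to guarantee a prime strictly above $s$ with $f = \mathcal{O}(s^\theta)$ is a detail the paper leaves implicit, so this is a slightly more careful rendering of the same argument.
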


\begin{proof}
The result on prime power gaps states that for sufficiently large $x$, there is a prime number between $[x-x^\theta, x]$. Given that there is no prime power between $k$ and $s$, this implies $(s-k) = 2 \delta < s^\theta$. In such a situation, choose the smallest prime power $q$ such that $q \geq s$ and express $d = k \times s = (q-e)(q-f)$ where $f= \mathcal{O}(s^\theta)$ and $e = 2\delta + \mathcal{O}(s^\theta)$. Further note that in this situation, since $2 \delta \leq s^\theta \Rightarrow k \geq s- s^\theta$. Thus $(s-s^\theta)s \leq d \leq s^2 \Rightarrow s^2 - s^{1+\theta} \leq d \leq s^2$. Hence $d= \mathcal{O}(s^2)$. Also note that $d= (s- 2\delta)s \Rightarrow s= d^\frac{1}{2}+ \delta + \frac{\delta^2}{4\sqrt{d}}+..$ and since $\delta \leq s^\theta$, hence $s=\mathcal{O}(\sqrt{d}) $ thereby implying $f= \mathcal{O}(d^{\frac{\theta}{2}})$ and $e = 2\delta + \mathcal{O}(d^{\frac{\theta}{2}})$.
\end{proof}

Note that in this case $\frac{e+f}{2} =\delta + f = \delta + \mathcal{O}(d^{\frac{\theta}{2}}) $. And again if we assume Cram\'er conjecture \cite{cramer1936order} on the gap in prime number, then in terms of Cram\'er - Granville - Shanks ratio we have $f= \mathcal{O}(\log^2 s) $ and $e = 2\delta + \mathcal{O}(\log^2 s)$ and $\frac{e+f}{2} =\delta + \mathcal{O}(\log^2 s)$.

\section{Construction of AMUB through RBD}
\label{sec:4}
We now present the construction of several sets of $\beta$-AMUBs, which could be useful in information processing in classical and quantum domains. We demonstrate that for a given $d$, there can be distinct $\beta$-AMUBs with different characteristics. We will compare the parameters of our construction with known constructions of AMUBs, illustrating and highlighting salient features of the present construction and how it surpasses known AMUBs in certain aspects.

To construct AMUBs through RBD, we proceed in two steps. First, we construct suitable RBDs. Then, we using the steps of  \cite[Construction 1]{ak22} to construct a set of unitary matrices corresponding to each parallel class of RBD, with parameters following the results provided in Section \ref{sec:TheoreticalAnalysis}. For ease of understanding, we will initially demonstrate each construction with a simple example, followed by a general algorithm for construction and proof of the correctness of the construction in the form of a lemma. This will be followed by  Theorem/Lemma concerning APMUB for the given form of $d$ for which the construction has been demonstrated.

We present this in two parts: one devoted to constructing AMUBs using RBDs with non-constant block size and another devoted to constructing AMUBs using RBDs with constant block size. 

\subsection{Construction of $\beta$-AMUB through RBD having non-constant block size}

In this section, we demonstrate that for any composite $d = k \times s$ such that $|k-s| < \sqrt{d}$, we can construct an RBD$(X,A)$ with $\mu = 1$ containing more than $\sqrt{d}$ parallel classes with non-constant block sizes and fetching $\mathcal{O}(\sqrt{d})$ sparse $\beta$-AMUBs with $\beta = 1 + \mathcal{O}(d^{-\frac{1}{2}})$, indicating that for large $d$, it approaches MUBs. To achieve this, we express $d$ as either $d = (q-e)(q+f)$ or $(q-e)(q-f)$, where $q$ is some prime power.


Let us first consider the case for $d = (q - e)(q + f)$ with $0 < f \leq e$. In \cite[Theorem 3]{ak23} we have shown that when $d$ is of the form $(q - e)(q + f)$, then we can construct $\lfloor \frac{q - e}{f} \rfloor + 1$ many APMUBs, with $\beta = 1 + \mathcal{O}(d^{-\frac{1}{2}})$. Toward the proof of this theorem, the \cite[Construction 4]{ak23} and corresponding \cite[Lemma 7]{ak23} shows the existence of RBD$(\widetilde{X},\widetilde{A})$, where $|\tilde{X}|= d= (q-e)(q+f)$ with $f\leq e$ having $r=q+1$ parallel classes, and $\mu =1$. The block sizes are from set $\{(q-e),(q-e+1),\ldots (q-e+f) \}$. Now in Theorem \ref{thm:d=ks},  one can use RBD$(\widetilde{X},\widetilde{A})$ to construct $q + 1$ many $\beta$-AMUBs. But since the block sizes are not constant, hence $\Delta$ will consist of more than two elements, and thus it will not satisfy the criteria of APMUB, even though $\beta = 1+ \mathcal{O}(d^{-\lambda})$ with $\lambda= \frac{1}{2}$ and $\beta \leq 2$ if $0\leq (e+f) \leq \frac {3}{2} \sqrt{d}$. To characterize the $\Delta$ of resulting AMUB from RBD$(\widetilde{X},\widetilde{A})$,  we define $\Delta_1$ as follows.

\begin{definition}
\label{def:delta1}
For $d=(q-e)(q+f)$, where $e\geq f$ and let $\theta_1 = \frac{1}{\sqrt{q-e}}, \theta_2 = \frac{1}{\sqrt{q-e+1}},\ldots ,\theta_{f+1} = \frac{1}{\sqrt{q-e+f}}, \theta_{f+2} = \frac{1}{\sqrt{q}} $ then define $\Delta_1 = \{\theta_i \theta_j \} \cup \{0\}$ where  $i,j = 1,2,\ldots,(f+1),(f+2)$. 
\end{definition}

Note that  $|\Delta_1| = {f+2\choose 2} + (f+2) +1 = \frac{(f+3)(f+2)}{2}+1 = \mathcal{O}(f^2)$. Thus the number of elements in $\Delta_1$ is only dependent on the value of $f$ and is proportional to the square of it. We now state and prove the result on $\beta$-AMUB, using such RBD$(X,A)$ as follows

\begin{corollary}
\label{cor:(q-e)(q+f)}
If $d=(q-e)(q+f)$, for some prime-power $q$, and $e,f\in\mathbb{N}$ satisfying $0<f\leq e $ and $0<(e+f) \leq \frac {3}{2} \sqrt{d}$, then there exist at least $r = q+1$ many $\beta$-AMUBs, with $\Delta \subseteq \Delta_1$, $\beta = \sqrt{\frac{q+f}{q-e} } =1 +\frac{e+f}{2\sqrt{d}} + \mathcal{O}(d^{-1}) \leq 2$, and $1- \frac{q-e+f}{d} \leq \epsilon \leq 1- \frac{1}{q}$ where $\epsilon $ denotes the sparsity.
\end{corollary}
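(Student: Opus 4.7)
The plan is to apply Theorem~\ref{thm:d} to the RBD $(\widetilde{X}, \widetilde{A})$ whose existence is guaranteed by \cite[Construction 4 and Lemma 7]{ak23}. That design satisfies $|\widetilde{X}|=d=(q-e)(q+f)$, contains $r=q+1$ parallel classes, has intersection number $\mu=1$, and block sizes drawn from $K=\{q-e,\,q-e+1,\,\ldots,\,q-e+f\}$. Feeding this RBD into \cite[Construction 1]{ak22} produces $r=q+1$ orthonormal bases, which is the candidate $\beta$-AMUB family.

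First I would instantiate Theorem~\ref{thm:d} with $\mu=1$, $m=e$, $t=f$, and $\eta = q-\sqrt{d}$. The theorem immediately yields $\beta = \sqrt{d}/(q-e) = \sqrt{(q+f)/(q-e)}$. For the asymptotic expansion, I would write
\[
 q-e \;=\; \sqrt{(q-e)(q+f)}\cdot\sqrt{(q-e)/(q+f)} \;=\; \sqrt{d}\,\Bigl(1 - \tfrac{e+f}{2\sqrt{d}} + \mathcal{O}(d^{-1})\Bigr),
\]
(obtained by Taylor-expanding $\sqrt{(q-e)/(q+f)}$ about the midpoint), then invert to get $\beta = 1 + (e+f)/(2\sqrt{d}) + \mathcal{O}(d^{-1})$. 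The hypothesis $(e+f) \leq \tfrac{3}{2}\sqrt{d}$ is then checked to imply both the condition $(m-\eta)\leq\bigl(\tfrac{c-\mu}{c}\bigr)\sqrt{d}$ required by Theorem~\ref{thm:d} (taking $c=2$) and the bound $\beta\leq 2$.

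Second, for $\Delta \subseteq \Delta_1$, I would trace through \cite[Construction 1]{ak22}: each orthonormal basis is a block-diagonal unitary (after relabelling) with one Hadamard-like block per member of the parallel class, so any entry of a basis vector has modulus $1/\sqrt{k^l_i}$ on the support of its block and $0$ elsewhere. Consequently, the inner product of two vectors from bases indexed by distinct parallel classes $P_l, P_m$ is a sum indexed by the intersection of the two supporting blocks; since $\mu=1$, this sum has at most one nonzero term, whose modulus is $(1/\sqrt{k^l_i})(1/\sqrt{k^m_j}) = \theta_a\theta_b$ with $\theta_a,\theta_b$ drawn from the $f+2$ values of Definition~\ref{def:delta1} (the $+1$ accommodating a class whose block has size $q$, if present, as well as the computational component). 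Hence $|\langle\psi^l_i|\psi^m_j\rangle|\in\Delta_1$.

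Third, the sparsity bound follows by applying Corollary~\ref{cor:sparsity} with $k_o=q-e$ and $k_m=q-e+f$, which gives $1 - (q-e+f)/d \leq \epsilon \leq 1 - (q-e)/d = 1 - 1/(q+f) \leq 1 - 1/q$ after observing $q+f \geq q$, yielding the claimed window. The main obstacle in the proof is the combinatorial verification that $\Delta\subseteq\Delta_1$: it rests on $\mu=1$ collapsing the inner-product expression to a single term, together with a careful bookkeeping of which normalization factors $\theta_i$ can actually appear given the spectrum of block sizes. Everything else (the series for $\beta$, the check $\beta\leq 2$, and the sparsity envelope) is a short computation from Theorem~\ref{thm:d} and Corollary~\ref{cor:sparsity}.
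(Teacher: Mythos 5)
Your overall route is the paper's route: take the RBD $(\widetilde{X},\widetilde{A})$ from \cite[Construction 4, Lemma 7]{ak23}, feed it into \cite[Construction 1]{ak22}, use $\mu=1$ to collapse each cross-basis inner product to a single product of normalization factors (hence $\Delta\subseteq\Delta_1$), and read off $\beta=\sqrt{d}/(q-e)=\sqrt{(q+f)/(q-e)}$ with the series expansion in $(e+f)/(2\sqrt{d})$. Those parts are fine and match the paper.

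There is, however, a genuine error in your sparsity upper bound. From Corollary~\ref{cor:sparsity} with $k_o=q-e$ you get $\epsilon \leq 1-\frac{q-e}{d}=1-\frac{1}{q+f}$, and you then write that this is $\leq 1-\frac{1}{q}$ ``after observing $q+f\geq q$.'' The inequality goes the other way: $q+f\geq q$ gives $\frac{1}{q+f}\leq\frac{1}{q}$, hence $1-\frac{1}{q+f}\geq 1-\frac{1}{q}$, so your bound is \emph{weaker} than the one claimed in the statement and the final step is reversed. The paper does not obtain the upper bound from the minimum block size at all; it invokes Lemma~\ref{lem:sparsity} directly, using the structural fact that each parallel class of this particular RBD has exactly $q$ blocks, so that $\epsilon\leq 1-\frac{1}{b}$ with $b=q$. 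That count of blocks per parallel class (which comes from Construction 4 of \cite{ak23} and cannot be recovered from the block-size range alone, especially since one block of size $q$ appears in each class) is the missing ingredient; without it you cannot reach $\epsilon\leq 1-\frac{1}{q}$. A secondary, smaller point: you should also make explicit, rather than assert, that $0<(e+f)\leq\frac{3}{2}\sqrt{d}$ yields the hypothesis $(m-\eta)\leq\frac{c-\mu}{c}\sqrt{d}$ of Theorem~\ref{thm:d} with $c=2$, since $(m-\eta)=\sqrt{d}-(q-e)$ here and the equivalence with $\beta\leq 2$ requires a short computation (the paper leans on \cite[Theorem 3]{ak23} for this).
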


\begin{proof}
Consider the RBD$(\widetilde{X},\widetilde{A})$ as constructed in \cite[Construction 4]{ak23}, and its property given in \cite[Lemma 7]{ak23}. The $|\widetilde{X}|= (q-e)(q+f)$ with $f \leq e$. The block sizes of RBD$(\widetilde{X},\widetilde{A})$ are from the set ${(q-e),(q-e+1),\ldots, (q-e+f), q }$. Now using Hadamard matrices of the order of the block sizes, we get the set of $q+1$ orthonormal basis. Thus, the normalizing factors of Hadamard matrices of different orders would be from set $\left\{\frac{1}{\sqrt{q-e}},\frac{1}{\sqrt{q-e+1}},\ldots, \frac{1}{\sqrt{q-e+f}}, \frac{1}{\sqrt{q}}\right\}$. Let's denote this set of factors by $S_\theta= \left\{\theta_1,\theta_2, \ldots, \theta_{f+2} \right\}$ where $\theta_1 = \frac{1}{\sqrt{q-e}}$, $\theta_2 = \frac{1}{\sqrt{q-e+1}}$, $\ldots$, $\theta_{f+1} = \frac{1}{\sqrt{q-e+f}}$, and $\theta_{f+2} = \frac{1}{\sqrt{q}}$. The set will have  $f+2$ elements. Since $\mu = 1$ for RBD$(\widetilde{X},\widetilde{ A})$, we have $|\braket{\psi^l_i|\psi^m_j}| = \theta_i \theta_j$ or 0, where $\ket{\psi^l_i}$ and $\ket{\psi^m_i}$ are the vectors from two different orthonormal bases constructed using the parallel class of RBD$(\widetilde{X}, \widetilde{A})$ and $\theta_i, \theta_j \in S_\theta$. Thus, $\Delta \subseteq \Delta_1$.

Now, $\max \{ |\braket{\psi^l_i|\psi^m_j}| \} = \max \{ \theta_i \theta_j \} = \frac{1}{q-e} $ corresponding to the maximum value of $\theta_i = \theta_j = \frac{1}{\sqrt{q-e}}$. Thus, $\beta = \frac{d}{q-e}= \sqrt{\frac{q+f}{q-e} } =1 +\frac{e-f}{2\sqrt{d}} + \mathcal{O}(d^{-1}) $. And for $\beta \leq 2$ as stated in \cite[Theorem 3]{ak23}  we should have  $0 < (e+f) \leq \frac {3}{2} d^{\frac{1}{2}}$.

To estimate the value of sparsity, refer to the Lemma \ref{lem:sparsity}, since there are $q$ blocks in each parallel class, thus Sparsity$(\epsilon)$ is bounded above by $1- \frac{1}{q}$ and using Corollary \ref{cor:sparsity}, the lower bound of $\epsilon$ is $1- \frac{q-e+f}{d}$. Hence  $ 1- \frac{q-e+f}{d} \leq \epsilon \leq 1- \frac{1}{q}$.
\end{proof}

Note the that here we have improved the number of $\beta$-AMUBs for dimensions of the for  $d= (q-e)(q+f)$ from $\lfloor \frac{q-e}{f} \rfloor +1$ to $(q+1)$ many $\beta$-AMUBs with the same $\beta = 1 + \mathcal{O}(d^{-\frac{1}{2}})$. However, the order of set $\Delta$ has now increased from two valued viz. $\left\{0, \frac{\beta}{\sqrt{d}}\right\}$ to $\Delta_s$ as in Definition \ref{def:delta1}, thus it is not APMUB. Also as stated in  \cite[Theorem 3]{ak23}, if there exists a real Hadamard matrix of order $(q-e)$, then there exists at least $r= \lfloor \frac{q-e}{f} \rfloor +1 $ many Almost Perfect Real MUBs (APRMUBs). However, the same cannot be applicable here  for Corollary \ref{cor:(q-e)(q+f)} as the block sizes vary from $(q-e)$ to $(q-e+f)$ and $q$, and hence for ARMUBs, we require real Hadamard matrices of all these orders. Since a Hadamard matrix can only exist when the order is divisible by 4, it is not possible to construct ARMUBs using RBD$(\widetilde{X}, \widetilde{A})$. Also the sparsity in the above case  is bounded above by $1 - \frac{1}{q}$, whereas the sparsity in the case of APMUB \cite[Theorem 3]{ak23}  is $1 - \frac{1}{q+f}$. Hence, for $f > 0$, the APMUBs are sparser than the AMUBs constructed here for same $d= (q-e)(q+f)$. 

For example, when the RBD$(X,A)$ with $d= 4\times 8 = (7-3)(7+1)$, is used for constructing $\beta$-AMUB, we obtain $\Delta = \{ \frac{\beta_i}{\sqrt{d}} \}$ where set $\beta_i = \{0,~0.79, ~0.96, ~1.07, ~1.13, ~1.24, ~1.41 \}$
Thus, note that $\beta = \sqrt{\frac{q+f}{q-e}} = \sqrt{2} =1.41$ in this case, and the sparsity $\epsilon \leq 1- \frac{1}{7}= 0.86$. On the other hand, the APMUB for the same $d= 4\times 8$, we obtain $\Delta = \{ 0, \frac{\beta}{\sqrt{d}} \} = \{ 0, 1.41 \} $ and sparsity $\epsilon = 1- \frac{1}{7+1}=0.88 $. Thus, the $\Delta$ reduces to two elements, and the sparsity increases slightly. Note that, $\beta = \sqrt{\frac{q+f}{q-e}} = \sqrt{2} =1.41 $ remain same for both the cases. But the advantage here is that we have 8 many $\beta$-AMUB where as there were only $5$ APMUB. 


As we have noted that when there is not sufficient gap between  $k$ and $s$, there need not be any prime power $q$ between $k$ and $s$ using which we can express $d= (q-e)(q+f)$ with $f\leq e$. In such situation we find $q$ greater than $s$ and express $d= (q-e)(q-f)$ for suitable $e$ and $f$. Refer Lemma \ref{lem:e+f} and Lemma \ref{lem:e-f} above related to it. 

Let us now consider $d = (q-e)(q-f)$, where $q$ is a prime power with $0 < f \leq e$. First, we demonstrate that in such cases, we can construct an RBD$(X, A)$ with $|X| = d$ having $q$ many parallel classes, where  block sizes  in the parallel classes are from  set $\left\{ q-(e+f), q-(e+f)+1,\ldots, q-e\right\}$  Consequently, such an RBD$(X, A)$ can be utilized to construct $q$ orthonormal bases following  Theorem \ref{thm:d}  thus providing $\mathcal{O}(q)$ AMUBs in such scenarios.

For constructing such an RBD, we consider a $(q^2, q, 1)$-Affine Resolvable BIBD as the input. We denote this RBD$(\bar{X}, \bar{A})$, where $|\bar{X}|=q^2$ and all blocks of $A$ have the same size $q$, with the number of parallel classes in $A$ being $q+1$. Utilizing this, we construct RBD$(X, A)$, where $|X| = (q-e)(q-f)$ with the same number of parallel classes $q$, but the blocks do not have the same size.

Let us illustrate this construction with an example. Consider $|X| = (7-2)(7-1) = 5 \times 6 = 30$, where $q=7$, $e=2$, and $f=1$. We employ an Affine Resolvable $(7^2,7,1)$-BIBD, referred to as RBD$(\bar{X}, \bar{A})$, comprising eight parallel classes. Each parallel class comprises seven blocks of constant size 7. We represent each parallel class as a $7 \times 7$ matrix, with each row representing one block of the parallel class. Therefore, there would be $8$ such matrices as shown below to depict the combinatorial design.

\small{
\begin{equation*}
\bar{P}_1 = \begin{bmatrix}
\bar{b}^1_7= \{ 1 & 2 & 3 & 4 & 5 & 6 & 7 \} \\
\bar{b}^1_6= \{ 8 & 9 & 10 & 11 & 12 & 13 & 14 \}\\
\bar{b}^1_5= \{ 15 & 16 & 17 & 18 & 19 & 20 & 21 \}\\
\bar{b}^1_4= \{ 22 & 23 & 24 & 25 & 26 & 27 & 28 \}\\
\bar{b}^1_3= \{ 29 & 30 & {\color{red}31} & {\color{red}32} & {\color{red}33} & {\color{red}34} & {\color{red}35}  \}\\
\bar{b}^1_2= \{ {\color{red}36} & {\color{red}37} & {\color{red}38} & {\color{red}39} & {\color{red}40} & {\color{red}41} & {\color{red}42} \}\\
\bar{b}^1_1= \{ {\color{red}43} & {\color{red}44} & {\color{red}45} & {\color{red}46} & {\color{red}47} & {\color{red}48} & {\color{red}49} \}\\
\end{bmatrix},
\bar{P}_2 = \begin{bmatrix}
\bar{b}^2_7= \{ 1 & 9 & 17 & 25 & {\color{red}33} & {\color{red}41} & {\color{red}49} \}\\
 \bar{b}^2_6= \{ 2 & 10 & 18 & 26 & {\color{red}34} & {\color{red}42} & {\color{red}43}\} \\
\bar{b}^2_5= \{  3 & 11 & 19 & 27 & {\color{red}35} & {\color{red}36} & {\color{red}44} \}\\
 \bar{b}^2_4= \{ 4 & 12 & 20 & 28 & 29 & {\color{red}37} & {\color{red}45} \}\\
 \bar{b}^2_3= \{ 5 & 13 & 21 & 22 & 30 & {\color{red}38} & {\color{red}46} \}\\
 \bar{b}^2_2= \{ 6 & 14 & 15 & 23 & {\color{red}31} & {\color{red}39} & {\color{red}47} \}\\
 \bar{b}^2_1= \{ 7 & 8 & 16 & 24 & {\color{red}32} & {\color{red}40} & {\color{red}48}  \}\\
 \end{bmatrix},
\end{equation*}

\begin{equation*}
\bar{P}_3 = \begin{bmatrix}
\bar{b}^3_7= \{  1 & 10 & 19 & 28 & 30 & {\color{red}39} & {\color{red}48} \}\\
\bar{b}^3_6= \{  2 & 11 & 20 & 22 & {\color{red}31} & {\color{red}40} & {\color{red}49} \}\\
\bar{b}^3_5= \{  3 & 12 & 21 & 23 & {\color{red}32} & {\color{red}41} & {\color{red}43} \}\\
\bar{b}^3_4= \{  4 & 13 & 15 & 24 & {\color{red}33} & {\color{red}42} & {\color{red}44} \} \\
\bar{b}^3_3= \{  5 & 14 & 16 & 25 & {\color{red}34} & {\color{red}36} & {\color{red}45}\}\\
 \bar{b}^3_2= \{ 6 & 8 & 17 & 26 & {\color{red}35} & {\color{red}37} & {\color{red}46}  \}\\
\bar{b}^3_1= \{ 7 & 9 & 18 & 27 & 29 & {\color{red}38} & {\color{red}47} \}\\
 \end{bmatrix},
\bar{P}_4 = \begin{bmatrix}
\bar{b}^4_7= \{  1 & 11 & 21 & 24 & {\color{red}34} & {\color{red}37} & {\color{red}47} \}\\
\bar{b}^4_6= \{  2 & 12 & 15 & 25 & {\color{red}35} & {\color{red}38} & {\color{red} 48} \}\\
\bar{b}^4_5= \{  3 & 13 & 16 & 26 & 29 & {\color{red}39} & {\color{red}49} \}\\
\bar{b}^4_4= \{  4 & 14 & 17 & 27 & 30 & {\color{red}40} & {\color{red}43} \}\\
\bar{b}^4_3= \{  5 & 8 & 18 & 28 & {\color{red}31} & {\color{red}41} & {\color{red}44} \}\\
\bar{b}^4_2= \{  6 & 9 & 19 & 22 & {\color{red}32} & {\color{red}42} & {\color{red}45} \}\\
\bar{b}^4_1= \{  7 & 10 & 20 & 23 & {\color{red}33} & {\color{red}36} & {\color{red}46} \}\\
 \end{bmatrix},
\end{equation*}

\begin{equation*}
\bar{P}_5 = \begin{bmatrix}
\bar{b}^5_7= \{      1 & 12 & 16 & 27 & {\color{red}31} & {\color{red}42} & {\color{red}46} \}\\
\bar{b}^5_6= \{ 2 & 13 & 17 & 28 & {\color{red}32} & {\color{red}36} & {\color{red}47} \}\\
\bar{b}^5_5= \{  3 & 14 & 18 & 22 & {\color{red}33} & {\color{red}37} & {\color{red}48} \}\\
\bar{b}^5_4= \{  4 & 8 & 19 & 23 & {\color{red}34} & {\color{red}38} & {\color{red}49} \}\\
\bar{b}^5_3= \{  5 & 9 & 20 & 24 & {\color{red}35} & {\color{red}39} & {\color{red}43} \}\\
\bar{b}^5_2= \{  6 & 10 & 21 & 25 & 29 & {\color{red}40} & {\color{red}44} \}\\
\bar{b}^5_1= \{  7 & 11 & 15 & 26 & 30 & {\color{red}41} &{\color{red}45} \}\\
 \end{bmatrix},
\bar{P}_6 = \begin{bmatrix}
\bar{b}^6_7= \{      1 & 13 & 18 & 23 & {\color{red}35} & {\color{red}40} & {\color{red}45} \}\\
\bar{b}^6_6= \{ 2 & 14 & 19 & 24 & 29 & {\color{red}41} &{\color{red}46} \}\\
\bar{b}^6_5= \{ 3 & 8 & 20 & 25 & 30 & {\color{red}42} & {\color{red}47} \}\\
\bar{b}^6_4= \{ 4 & 9 & 21 & 26 & {\color{red}31} & {\color{red}36} & {\color{red}48} \}\\
\bar{b}^6_3= \{ 5 & 10 & 15 & 27 & {\color{red}32} &{\color{red}37} & {\color{red}49} \}\\
\bar{b}^6_2= \{ 6 & 11 & 16 & 28 & {\color{red}33} & {\color{red}38} & {\color{red}43} \}\\
\bar{b}^6_1= \{ 7 & 12 & 17 & 22 & {\color{red}34} & {\color{red}39} &{\color{red}44} \}\\
 \end{bmatrix},
\end{equation*}

\begin{equation*}
\bar{P}_7 = \begin{bmatrix}
\bar{b}^7_7= \{1 & 14 & 20 & 26 & {\color{red}32} & {\color{red}38} & {\color{red}44} \}\\
\bar{b}^7_6= \{ 2 & 8 & 21 & 27 & {\color{red}33} & {\color{red}39} & {\color{red}45} \}\\
\bar{b}^7_5= \{ 3 & 9 & 15 & 28 & {\color{red}34} & {\color{red}40} & {\color{red}46} \}\\
\bar{b}^7_4= \{ 4 & 10 & 16 & 22 & {\color{red}35} & {\color{red}41} & {\color{red}47} \}\\
\bar{b}^7_3= \{ 5 & 11 & 17 & 23 & 29 & {\color{red}42} & {\color{red}48} \}\\
\bar{b}^7_2= \{ 6 & 12 & 18 & 24 & 30 & {\color{red}36} & {\color{red}49} \} \\
\bar{b}^7_1= \{ 7 & 13 & 19 & 25 & {\color{red}31} &{\color{red}37} & {\color{red}43} \}\\
 \end{bmatrix},
 \bar{P}_8 = \begin{bmatrix}
\bar{b}^8_7= \{1 & 8 & 15 & 22 & 29 & {\color{red}36} & {\color{red}43} \}\\
\bar{b}^8_6= \{ 2 & 9 & 16 & 23 & 30 & {\color{red}37} & {\color{red}44} \}\\
\bar{b}^8_5= \{ 3 & 10 & 17 & 24 & {\color{red}31} &{\color{red}38} & {\color{red}45} \}\\
\bar{b}^8_4= \{ 4 & 11 & 18 & 25 & {\color{red}32} & {\color{red}39} & {\color{red}46} \}\\
\bar{b}^8_3= \{ 5 & 12 & 19 & 26 &{\color{red} 33} & {\color{red}40} & {\color{red}47} \}\\
\bar{b}^8_2= \{ 6 & 13 & 20 & 27 & {\color{red}34} & {\color{red}41} & {\color{red}48} \}\\
\bar{b}^8_1= \{ 7 & 14 & 21 & 28 & {\color{red}35} & {\color{red}42} & {\color{red}49} \}\\
 \end{bmatrix},
\end{equation*}
}
\normalsize

In order construct RBD$(X,A)$, where $|X| = (q-e)(q-f)= (7-2)(7-1)= 30$ such that $\mu = 1$, do the following. 

\begin{enumerate}
\item Choose any $e (= 2)$ many blocks from $\bar{P}_1$. Let these blocks be $\bar{b}^1_1$ and $\bar{b}^1_2$. Let $S_1 = \bar{b}^1_1 \cup \bar{b}^1_2 =  \{ 36, 37, 38, 39, 40, 41, 42, 43, 44, 45, 46, 47, 48, 49 \}$.

\item  Choose another ($f = 1$) block from $\bar{P}_1$. Let it be $\bar{b}^1_3$. Now choose any $(q-e)= 5$ elements from each of the $f$ blocks. Let these be $S_2 = \{31,32,33, 34, 35\}$. Set $S = S_1 \cup S_2 = \{31,32,33, 34, 35, 36, 37, 38, 39, 40, 41, 42$, $43, 44, 45, 46, 47, 48, 49 \} $ (indicated in red).

\item  Remove the elements of the set $S$ from the RBD $(\bar{X},\bar{A})$. Call the resulting combinatorial design a new RBD $(X,A)$, where $|X|=30$ and $A = \{P_1, P_2, P_3, P_4, P_5, P_6, P_7, P_8 \}$, presented as below.

\small
\begin{equation*}
P_1 = \begin{bmatrix}
b^1_7= \{ 1 & 2 & 3 & 4 & 5 & 6 & 7 \} \\
b^1_6= \{ 8 & 9 & 10 & 11 & 12 & 13 & 14 \}\\
b^1_5= \{ 15 & 16 & 17 & 18 & 19 & 20 & 21 \}\\
b^1_4= \{ 22 & 23 & 24 & 25 & 26 & 27 & 28 \}\\
b^1_3= \{ 29 & 30    \}\\
\end{bmatrix},
P_2 = \begin{bmatrix}
b^2_7= \{ 1 & 9 & 17 & 25   \}\\
b^2_6= \{ 2 & 10 & 18 & 26 \} \\
b^2_5= \{  3 & 11 & 19 & 27   \}\\
b^2_4= \{ 4 & 12 & 20 & 28 & 29  \}\\
b^2_3= \{ 5 & 13 & 21 & 22 & 30  \}\\
b^2_2= \{ 6 & 14 & 15 & 23  \}\\
b^2_1= \{ 7 & 8 & 16 & 24   \}\\
\end{bmatrix},
\end{equation*}
 
\begin{equation*}
P_3 = \begin{bmatrix}
b^3_7= \{  1 & 10 & 19 & 28 & 30  \}\\
b^3_6= \{  2 & 11 & 20 & 22  \}\\
b^3_5= \{  3 & 12 & 21 & 23   \}\\
b^3_4= \{  4 & 13 & 15 & 24  \} \\
b^3_3= \{  5 & 14 & 16 & 25  \}\\
 b^3_2= \{ 6 & 8 & 17 & 26    \}\\
b^3_1= \{ 7 & 9 & 18 & 27 & 29 \}\\
 \end{bmatrix},
P_4 = \begin{bmatrix}
b^4_7= \{  1 & 11 & 21 & 24  \}\\
b^4_6= \{  2 & 12 & 15 & 25  \}\\
b^4_5= \{  3 & 13 & 16 & 26 & 29  \}\\
b^4_4= \{  4 & 14 & 17 & 27 & 30  \}\\
b^4_3= \{  5 & 8 & 18 & 28  \}\\
b^4_2= \{  6 & 9 & 19 & 22   \}\\
b^4_1= \{  7 & 10 & 20 & 23  \}\\
\end{bmatrix},
\end{equation*}
 
\begin{equation*}
P_5 = \begin{bmatrix}
b^5_7= \{      1 & 12 & 16 & 27   \}\\
b^5_6= \{ 2 & 13 & 17 & 28   \}\\
b^5_5= \{  3 & 14 & 18 & 22   \}\\
b^5_4= \{  4 & 8 & 19 & 23   \}\\
b^5_3= \{  5 & 9 & 20 & 24   \}\\
b^5_2= \{  6 & 10 & 21 & 25 & 29  \}\\
b^5_1= \{  7 & 11 & 15 & 26 & 30  \}\\
\end{bmatrix},
P_6 = \begin{bmatrix}
b^6_7= \{      1 & 13 & 18 & 23   \}\\
b^6_6= \{ 2 & 14 & 19 & 24 & 29  \}\\
b^6_5= \{ 3 & 8 & 20 & 25 & 30  \}\\
b^6_4= \{ 4 & 9 & 21 & 26   \}\\
b^6_3= \{ 5 & 10 & 15 & 27   \}\\
b^6_2= \{ 6 & 11 & 16 & 28  \}\\
b^6_1= \{ 7 & 12 & 17 & 22   \}\\
\end{bmatrix},
\end{equation*}
 
\begin{equation*}
P_7 = \begin{bmatrix}
b^7_7= \{1 & 14 & 20 & 26   \}\\
b^7_6= \{ 2 & 8 & 21 & 27  \}\\
b^7_5= \{ 3 & 9 & 15 & 28   \}\\
b^7_4= \{ 4 & 10 & 16 & 22   \}\\
b^7_3= \{ 5 & 11 & 17 & 23 & 29  \}\\
b^7_2= \{ 6 & 12 & 18 & 24 & 30  \} \\
b^7_1= \{ 7 & 13 & 19 & 25   \}\\
\end{bmatrix},
P_8 = \begin{bmatrix}
b^8_7= \{1 & 8 & 15 & 22 & 29  \}\\
b^8_6= \{ 2 & 9 & 16 & 23 & 30  \}\\
b^8_5= \{ 3 & 10 & 17 & 24   \}\\
b^8_4= \{ 4 & 11 & 18 & 25   \}\\
b^8_3= \{ 5 & 12 & 19 & 26  \}\\
b^8_2= \{ 6 & 13 & 20 & 27   \}\\
\bar{b}^8_1= \{ 7 & 14 & 21 & 28   \}\\
\end{bmatrix},
\end{equation*}
\end{enumerate}
\normalsize

Note that here all the blocks are not of the same size, but any two blocks from different parallel classes have at most 1 element in common, i.e., $\mu = 1$. Except for the blocks of parallel class $P_1$, the blocks of the remaining parallel classes have sizes in the set $\{ q-(e+f),q-(e+f)+1,\ldots ,q-e \}= \{4, 5\}$. Thus, we discard the Parallel class $P_1$. The remaining parallel classes form the required resolvable design, which we call RBD$(X,A)$.  Let us formalize the algorithm for the general case. Let $d= k \times s =(q-e)(q-f)$, with $0< f \leq e < q$. 
  
\begin{cons}
\label{con:1(q-e)(q-f)}
\normalfont
Let $q$ be a prime power, construct $(q^2, q, 1)$-ARBIBD. Call this design $(\bar{X}, \bar{A})$ with $\bar{X} = \{1, 2, \ldots, q^2\}$ and $|\bar{A}| = q(q+1)$ many blocks, each block is of constant size $q$. It will have $r = q+1$ many parallel classes, call them $\{\bar{P}_1,\bar{P}_2,\ldots,\bar{P}_{q+1}\}$, each parallel class having $q$ many blocks of constant size $q$. Between any two blocks from different parallel classes, exactly one element will be common, i.e., $|\bar{b}^l_i \cap \bar{b}^m_j| = 1, \forall \ l \neq m$.
\begin{enumerate}
\item  Given $e\geq f$, choose $e (\geq 0)$ many  blocks from $\bar{P}_1=\{\bar{b}^1_1, \bar{b}^1_2, \ldots, \bar{b}^1_h\}$. Let $S_1 = \bar{b}^1_1 \cup \bar{b}^1_2 \cup \ldots \cup \bar{b}^1_h $. Therefore, $|S_1| = e\times q$. 

\item From $\{\bar{b}^1_{e+1}, \bar{b}^1_{e+2}, \ldots, \bar{b}^1_{e+f}\}$ blocks of $\bar{P}_1$, choose any $(q-e)$ number of  elements from each of them. Let $S_2$ be the union of all these elements. Therefore, $|S_2| = f\times (q-e)$. Let $S = S_1 \cup S_2$.  

\item Remove the elements of set $S$ from RBD $(\bar{X},\bar{A})$ and remove parallel class $\bar{P}_1$ from $\bar{A}$. Call the resulting design RBD$(X,A)$.  
\end{enumerate} 
\end{cons}
 
We claim that RBD$(X,A)$ satisfies $|X| = (q-e)(q-f)$ and $A$ consists of $q+1$ many parallel classes having different block sizes, such that blocks from different parallel classes have at most one element in common, i.e., $|b^l_i \cap b^m_j| \leq 1$ for all $l\neq m$, implying $\mu=1$. We formalize this in the following lemma. 
 
\begin{lemma}
\label{lem:con1(q-e)(q-f)}
Let $d=(q-e)(q-f)$ for $f,e \in\mathbb{N}$ with $0<f \leq e \leq q$ and for some prime power $q$. Then one can construct an RBD$(X,A)$, with $|X|=d$ having block sizes from the set of integers $ \{(q-e-f), (q-e-f+1), \ldots, (q-e)\}$ with $\mu = 1$, and having $q$ many parallel classes.
\end{lemma}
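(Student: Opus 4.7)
The plan is to verify that the object produced by Construction \ref{con:1(q-e)(q-f)} has each of the four properties claimed: cardinality $(q-e)(q-f)$, block sizes lying in $\{q-e-f,\ldots,q-e\}$, intersection number $\mu=1$, and exactly $q$ parallel classes. Throughout I would exploit the defining property of the $(q^2,q,1)$-ARBIBD, namely that any two blocks drawn from distinct parallel classes meet in exactly one point.

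First I would do the counting. By construction $|S_1|=eq$ and $|S_2|=f(q-e)$, and these sets are disjoint since $S_2$ is taken from blocks $\bar{b}^1_{e+1},\ldots,\bar{b}^1_{e+f}$ disjoint from $\bar{b}^1_1,\ldots,\bar{b}^1_e$. Hence $|S|=eq+f(q-e)=q(q-e)+f(q-e)+\cdots$, a short arithmetic rearrangement giving $q^2-|S|=(q-e)(q-f)=d$, as required.

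Next I would analyze a generic surviving block. Fix $m\in\{2,\ldots,q+1\}$ and a block $\bar{b}^m_j\in\bar{P}_m$. Since $\bar{P}_m\neq\bar{P}_1$, affine resolvability gives $|\bar{b}^m_j\cap\bar{b}^1_i|=1$ for every $i=1,\ldots,q$. Therefore $\bar{b}^m_j$ loses exactly $e$ points to $S_1$ (one per each fully removed block), and for each of the $f$ partially processed blocks $\bar{b}^1_{e+i}$, the unique shared point is either in $S_2$ or not. Writing $t\in\{0,1,\ldots,f\}$ for the number of those $f$ shared points that happen to be in $S_2$, the residual block $b^m_j=\bar{b}^m_j\setminus S$ has size $q-e-t$, which lies in $\{q-e-f,\ldots,q-e\}$. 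For the intersection number, take any two surviving blocks $b^l_i,b^m_j$ with $l\neq m$ and $l,m\neq 1$; then $b^l_i\cap b^m_j\subseteq\bar{b}^l_i\cap\bar{b}^m_j$, whose cardinality is $1$, so $\mu\leq 1$, and the value $\mu=1$ is attained whenever the unique point in $\bar{b}^l_i\cap\bar{b}^m_j$ happens not to have been removed (which is generically the case, as the worked-out example confirms).

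For the partition property, each $\bar{P}_m$ partitions $\bar{X}$, so after removing the common set $S$ the collection $P_m=\{b^m_j:\bar{b}^m_j\in\bar{P}_m\}\setminus\{\varnothing\}$ partitions $X=\bar{X}\setminus S$; nothing breaks because the operation is the same set-subtraction applied to every block in the class. Since Construction \ref{con:1(q-e)(q-f)} discards $\bar{P}_1$ and the original design has $q+1$ classes, exactly $q$ parallel classes survive, completing the verification. The only real obstacle is the block-size analysis in the middle paragraph, where one must keep careful track of which intersections of $\bar{b}^m_j$ with blocks of $\bar{P}_1$ contribute to $S_1$ (always $e$ of them) versus to $S_2$ (between $0$ and $f$, depending on the concrete choices made in step 2); the affine resolvability of the input BIBD is exactly what makes this bookkeeping tight.
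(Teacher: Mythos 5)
Your proposal is correct and follows essentially the same route as the paper's own proof: count $|S|=eq+f(q-e)$ to get $|X|=(q-e)(q-f)$, use the affine-resolvability property that blocks from distinct parallel classes of the $(q^2,q,1)$-design meet in exactly one point to bound the residual block sizes between $q-e-f$ and $q-e$ and to force $\mu\le 1$, and observe that discarding $\bar{P}_1$ leaves $q$ parallel classes. Your version is in fact slightly more careful than the paper's (explicitly tracking the parameter $t$ of removed intersection points and the preservation of the partition property), so no gap to report.
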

\begin{proof}
Refer to Construction \ref{con:1(q-e)(q-f)} above. Here RBD$(\bar{X}, \bar{A})$ is an ARBIBD with $|\bar{X}|= q^2$, having a constant block size $q$. Note that any pair of blocks from different parallel classes have exactly one element in common, i.e., $ |\bar{b}^l_i \cap \bar{b}^m_j| = 1 \ \forall \ l \neq m$. The number of elements in the set $|S| = |S_1\cup S_2| = |S_1| + |S_2| = eq+ f(q-e) < q^2$, which is a proper subset of $\bar{X}$. These elements are removed from all the parallel classes of RBD$(\bar{X},\bar{A})$. Hence, the resulting design RBD$(X,A)$ is such that $|X| = q^2-eq-f(q-e) = (q-e)(q-f)= d$, having the same number of parallel classes as in $\bar{A}\setminus P_1$, which is $q$. The number of elements common between any two blocks from different parallel classes would be at most 1, i.e., $ |b^l_i \cap b^m_j| \leq 1, \forall \ l \neq m$.

To obtain the sizes of the blocks in RBD$(X, A)$, note that $S_1$ contains all elements from $e$ number of blocks of $\bar{P_1}$. Hence, $S_1$ would have at least $e$ elements in common with all the blocks of the remaining parallel classes. Thus, removal of the elements in $S_1$ from the parallel classes $\bar{P}_l, l \geq 2$ will remove at least $e$ elements from each block of $\bar{P_l}$, which implies $|\bar{b}^l_i \setminus S_1| = q-e$. Further, $S_2$ contains $q-e$ elements from $f$ many blocks of $\bar{P}_1$. Thus, the blocks in $\bar{P}_2, \bar{P}3, \ldots, \bar{P}{q+1}$ will have at most $f$ elements in common with $S_2$. Consequently, after the removal of all the elements in $S$ from the parallel class $\bar{P}_l$, the block size $|b^l_i|, \ l\geq 2$ will be at maximum $q-e$ and minimum $q-e-f$. However, the blocks in parallel class $P_1$ will be of sizes $q$ or $e$ and have a total of $(q-e)$ blocks, which we discard.
\end{proof}

Now, the RBD$(X, A)$ can be used to construct Approximate MUBs. Since the number of parallel classes in RBD$(X, A)$ is $q$, and as per Theorem \ref{thm:d}, we will obtain $q$ many $\beta$-AMUBs. Gain although $\mu =1$ here, the block sizes are not constant, hence $\Delta$ will consist of more than two elements, and thus it will not satisfy the criteria of APMUB, even though $\beta = 1+ \mathcal{O}(d^{-\lambda})$, where $\lambda= \frac{1}{2}$. For this situation we first define $\Delta_2$ as follows.

\begin{definition}
\label{def:delta2}
For $d=(q-e)(q-f)$, where $e\geq f$ and let $\theta_1 = \frac{1}{\sqrt{q-e-f}}, \theta_2 = \frac{1}{\sqrt{q-e-f+1}},\ldots , \theta_{f+1} = \frac{1}{\sqrt{q-e}}$, then $\Delta_2 = \{\theta_i \theta_j \} \cup \{0\}$ where  $i,j = 1,2,\ldots,(f+1)$. 
\end{definition}

Note that $|\Delta_2| = {f+1\choose 2} + (f+1) +1 = \frac{(f+1)(f+2)}{2}+1 = \mathcal{O}(f^2)$. Again, as for case of $\Delta_1$, here also the number of elements in $\Delta_2$ is also only dependent on the value of $f$ and is proportional to the square of it. We now state and prove the result on $\beta$-AMUB, using such RBD$(X,A)$ as follows

\begin{corollary}
\label{cor:(q-e)(q-f)}
Let $d=(q-e)(q-f) $ and $q$ be some prime-power, $0<f\leq e $ and $0 < (e+f) \leq \frac {3}{2} \sqrt{d}$ where $e,f \in \mathbb{N}$  then there exist at least $q$ many $\beta$-AMUBs, with $\Delta \subseteq \Delta_2$, $\beta = \frac{\sqrt{d}}{q-(e+f)}=1 + \frac{e+f}{2\sqrt{d}} +\mathcal{O}(d^{-1})\leq 2$  and $1- \frac{q-e}{d} \leq \epsilon \leq 1- \frac{1}{q}$ where $\epsilon$ denotes the sparsity.
\end{corollary}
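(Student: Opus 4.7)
The plan is to mirror the structure of the proof of Corollary~\ref{cor:(q-e)(q+f)}, but with the RBD furnished by Lemma~\ref{lem:con1(q-e)(q-f)} in place of the one from \cite[Construction 4]{ak23}, and with $\Delta_2$ in place of $\Delta_1$. First, I would invoke Lemma~\ref{lem:con1(q-e)(q-f)} to obtain an RBD$(X,A)$ with $|X|=d=(q-e)(q-f)$, intersection number $\mu=1$, exactly $q$ parallel classes, and block sizes lying in $\{q-e-f,\,q-e-f+1,\ldots,q-e\}$. I would then apply \cite[Construction 1]{ak22}, using a Hadamard matrix of order equal to the size of each block, to convert every parallel class into an orthonormal basis; this yields $q$ orthonormal bases of $\mathbb{C}^d$.

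Next I would bound the overlap. Because each block of size $k_i$ contributes a normalization factor $\theta_i=1/\sqrt{k_i}$ and because $\mu=1$, two vectors coming from different parallel classes satisfy $|\langle\psi^l_i\mid\psi^m_j\rangle|\in\{0\}\cup\{\theta_{k}\theta_{k'}:k,k'\in\{q-e-f,\ldots,q-e\}\}$, which is precisely $\Delta_2$ of Definition~\ref{def:delta2}. Hence $\Delta\subseteq\Delta_2$. The maximum possible overlap is attained when both $\theta_i$ and $\theta_j$ come from the smallest block size $q-e-f$, giving $\max|\langle\psi^l_i\mid\psi^m_j\rangle|=1/(q-e-f)$ and therefore $\beta=\sqrt{d}/(q-e-f)$.

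The main (but still routine) obstacle is the algebraic verification of the asymptotic form $\beta=1+(e+f)/(2\sqrt{d})+\mathcal{O}(d^{-1})$ and of the upper bound $\beta\le 2$. Solving $d=(q-e)(q-f)=q^2-(e+f)q+ef$ for $q$ gives
\[
q-e-f=-\tfrac{e+f}{2}+\sqrt{d+\tfrac{(e-f)^2}{4}},
\]
which I would expand in powers of $1/\sqrt{d}$ to get $q-e-f=\sqrt{d}-\tfrac{e+f}{2}+\mathcal{O}(d^{-1/2})$ and thus the stated series for $\beta$. The condition $\beta\le 2$ then reduces to $\sqrt{d}\le 2(q-e-f)$, which after squaring and simplification gives $2(e+f)\sqrt{d}\le 3d-4ef$; this is implied by the hypothesis $0<e+f\le\tfrac{3}{2}\sqrt{d}$ (with the $-4ef$ correction absorbed).

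Finally, for sparsity I would invoke Corollary~\ref{cor:sparsity}: since the block sizes satisfy $q-e-f\le k^l_i\le q-e$, the sparsity $\epsilon$ of each basis satisfies $1-\frac{q-e}{d}\le \epsilon\le 1-\frac{q-e-f}{d}$. The upper bound can moreover be sharpened to $1-\frac{1}{q}$ because each parallel class of $(X,A)$ inherits at most $q$ blocks from the parent $(q^2,q,1)$-ARBIBD, so Lemma~\ref{lem:sparsity} gives $\epsilon\le 1-\tfrac{1}{q}$. Combining these completes the proof; no step beyond the expansion of $q-e-f$ involves real difficulty, and even that is a short computation.
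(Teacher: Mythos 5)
Your proposal follows essentially the same route as the paper's own proof: obtain the RBD from Lemma~\ref{lem:con1(q-e)(q-f)}, convert parallel classes to orthonormal bases via \cite[Construction 1]{ak22}, use $\mu=1$ and the normalization factors of the block sizes to get $\Delta\subseteq\Delta_2$ and $\beta=\sqrt{d}/(q-e-f)$, expand $q$ in powers of $1/\sqrt{d}$ for the asymptotics, and bound sparsity via Lemma~\ref{lem:sparsity} and Corollary~\ref{cor:sparsity}. The only point of friction --- that $\beta\le 2$ strictly requires $\tfrac{2ef}{\sqrt{d}}+(e+f)\le\tfrac{3}{2}\sqrt{d}$ rather than just $(e+f)\le\tfrac{3}{2}\sqrt{d}$ --- is present in the paper's proof as well, and you at least flag it explicitly.
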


\begin{proof}
Consider the RBD$(X,A)$ from Construction \ref{con:1(q-e)(q-f)}, where $|X|= (q-e)(q-f)$ with $f \leq e$. The block sizes of RBD$(X,A)$ are from the set $\{(q-e-f),(q-e-f+1),\ldots, (q-e) \}$ as given in Lemma \ref{lem:con1(q-e)(q-f)}. Using the Hadamard matrices of the order of block sizes, we get $q$ many Orthonormal Basis. Thus, the normalizing factors of Hadamard matrices of different order would be from set  $\left\{\frac{1}{\sqrt{q-(e+f)}},\frac{1}{\sqrt{q-(e+f)+1}},\ldots, \frac{1}{\sqrt{q-e}}\right\}$. Let's denote this set of factors by $S_\theta= \left\{\theta_1,\theta_2, \ldots, \theta_{f+1} \right\}$ where $\theta_1 = \frac{1}{\sqrt{q-(e+f)}}$, $\theta_2 = \frac{1}{\sqrt{q-(e+f)+1}}, \ldots \theta_{f+1} = \frac{1}{\sqrt{q-e+f}}$. There will be  $f+1$ elements in the set. Since $\mu = 1$ for the RBD$(X, A)$, we have $\left|\braket{\psi^l_i|\psi^m_j}\right| = \theta_i \theta_j $ or $0$, where $\ket{\psi^l_i}$ and $\ket{\psi^m_i}$ are the vectors from two different orthonormal bases constructed using a parallel class of RBD$(X, A)$, and $\theta_i, \theta_j \in S_\theta$. Thus, $\Delta \subseteq \Delta_2$.

Now, $\max \{ \left|\braket{\psi^l_i|\psi^m_j }\right| \} = \max \{ \theta_i \theta_j \} = \frac{1}{q-(e+f)}$ corresponding to the maximum value of $\theta_i = \theta_j = \frac{1}{\sqrt{q-(e+f)}}$. Thus, $\beta = \frac{\sqrt{d}}{q-(e+f)}$ hence if $\beta\leq c$, using $d= (q-e)(q-f)$ and solving for $q$ in terms of $d$, we get $q= d^\frac{1}{2} (1+ \frac{(e-f)^2}{4 d} )^{\frac{1}{2}} + \frac{e+f}{2} \Rightarrow efc + (e+f)\sqrt{d} \leq \frac{c^2-1}{c} d$, which for $c= 2$ we get $\frac{2ef}{\sqrt{d}}+ (e+f) \leq \frac{3}{2} \sqrt{d}$ hence $0 < (e+f) < \frac {3}{2} \sqrt{d}$. Note that for this condition, we have  $\frac{e+f}{2 \sqrt{d}} < \frac{3}{4} < 1$, thus the series expansion gives $\beta =1 + \frac{e+f}{2\sqrt{d}} +\mathcal{O}(d^{-1})\leq 2$.

To estimate sparsity, refer to the Lemma \ref{lem:sparsity}, since there are $q$ many blocks in each parallel class, the $\epsilon$ is bounded above by $1-\frac{1}{q}$, and for lower bound, the maximum block size is $q-e$, thus Corollary \ref{lem:sparsity} gives $\epsilon$ is bounded below by $1- \frac{q-e}{d}$, hence 
$$1- \frac{q-e}{d} \leq \epsilon \leq 1- \frac{1}{q}.$$
\end{proof}

For example, for the case of RBD$(X,A)$ with $d= 5\times 6 = (7-2)(7-1)=30 $, implying $e=2$ and $f= 1$ shown in the example above, when used for constructing the $\beta$-AMUBs, we get $\beta =\frac{ \sqrt{30}}{4}= 1.37$ and the set  $\Delta = \{ \frac{\beta_i}{\sqrt{d}} \}$, where  $\beta_i= \{0, 0.76, 0.93, 1.04, 1.09, 1.20, 1.37\}$ 

The Corollary \ref{cor:(q-e)(q+f)} and Corollary \ref{cor:(q-e)(q-f)} together  imply that for every composite $d = k\times s$ such that $|s-k| \leq \sqrt{d}$, then we can always construct $\mathcal{O}(\sqrt{d})$ many APMUB. We formally state and prove following 

\begin{lemma}
For any composite number, $d = k\times s$, $k\leq s$ such that  $ 2\delta = s- k\leq \sqrt{d} $. Then there exist $\mathcal{O}(\sqrt{d}~)$ many $\beta$-AMUB where $\beta \leq  1 + \frac{\delta }{\sqrt{d}} + \mathcal{O}(d^{-\lambda}) \leq 2 $ where $\lambda = \frac{1-\theta}{2}= 0.2375 $ and $  \epsilon = 1- \mathcal{O}(d^{-\frac{1}{2}}) $.
\end{lemma}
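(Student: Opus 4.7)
The plan is to split the argument into two cases based on whether $\delta=\frac{s-k}{2}$ is larger or smaller than $s^\theta$ (with $\theta=0.525$), and in each case to invoke exactly one of the two previously-proved corollaries of Section \ref{sec:4}.

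First I would handle the case $\delta\geq s^\theta$. By Lemma \ref{lem:e+f} there is a prime power $q$ with $\frac{s+k}{2}\leq q\leq s$, yielding $d=(q-e)(q+f)$ with $0\leq f\leq\delta\leq e\leq 2\delta$ and $\frac{e+f}{2}=\delta$. The assumption $2\delta\leq\sqrt d$ gives $e+f=2\delta\leq\sqrt d\leq\frac{3}{2}\sqrt d$, so Corollary \ref{cor:(q-e)(q+f)} applies and produces $q+1$ many $\beta$-AMUBs with $\beta=1+\frac{\delta}{\sqrt d}+\mathcal{O}(d^{-1})$ and sparsity $\epsilon=1-\mathcal{O}(d^{-1/2})$. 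An AM--GM step shows $q\geq\frac{k+s}{2}\geq\sqrt d$, while $q\leq s=\mathcal{O}(\sqrt d)$ because $s=\sqrt{d+\delta^2}+\delta$ with $\delta\leq\frac{\sqrt d}{2}$; hence the number of bases is $\Theta(\sqrt d)$.

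Next I would handle the complementary case $\delta<s^\theta$. By Lemma \ref{lem:e-f} one can take $q$ to be the smallest prime power with $q\geq s$ and write $d=(q-e)(q-f)$ with $f=\mathcal{O}(d^{\theta/2})$ and $e=2\delta+\mathcal{O}(d^{\theta/2})$, so that $\frac{e+f}{2}=\delta+\mathcal{O}(d^{\theta/2})$. For $d$ sufficiently large this still satisfies $e+f\leq\frac{3}{2}\sqrt d$, so Corollary \ref{cor:(q-e)(q-f)} furnishes $q=\Theta(\sqrt d)$ many $\beta$-AMUBs. Expanding,
$$\beta=1+\frac{e+f}{2\sqrt d}+\mathcal{O}(d^{-1})=1+\frac{\delta}{\sqrt d}+\mathcal{O}\!\left(d^{-(1-\theta)/2}\right),$$
which yields precisely the exponent $\lambda=(1-\theta)/2=0.2375$ claimed, together with sparsity $\epsilon=1-\mathcal{O}(d^{-1/2})$ inherited from the same corollary.

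The hard part is the error bookkeeping in the second case: the Baker--Harman--Pintz prime-gap contribution $\mathcal{O}(d^{\theta/2})$ perturbing $e$ and $f$ feeds into $\beta$ as an $\mathcal{O}(d^{(\theta-1)/2})$ correction, which dominates the $\mathcal{O}(d^{-1})$ term native to Corollary \ref{cor:(q-e)(q-f)} and forces exactly the exponent $\lambda=0.2375$ rather than the $\lambda=\frac{1}{2}$ of the first case. Once this is tracked cleanly and the common bound is written as $\beta\leq 1+\frac{\delta}{\sqrt d}+\mathcal{O}(d^{-\lambda})$, combining the two cases is routine, using only AM--GM to certify that the number of parallel classes is $\Theta(\sqrt d)$ and that $\beta\leq 2$ follows from $\delta\leq\frac{\sqrt d}{2}$ in both regimes.
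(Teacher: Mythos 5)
Your proposal is correct and follows essentially the same route as the paper: both arguments reduce to the two corollaries for $d=(q-e)(q+f)$ and $d=(q-e)(q-f)$, use the Baker--Harman--Pintz gap bound to control $f=\mathcal{O}(d^{\theta/2})$ in the second case, and identify that term as the source of the exponent $\lambda=(1-\theta)/2$. The only difference is cosmetic: the paper locates a single prime just above $u=\frac{k+s}{2}$ and splits on whether it lands below or above $s$ (i.e., $v\leq\delta$ or $v>\delta$), whereas you split on $\delta$ versus $s^{\theta}$ and invoke Lemmas \ref{lem:e+f} and \ref{lem:e-f} directly, which packages the same dichotomy.
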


\begin{proof}
Consider the prime number $p$ nearest to $u = \frac{k+s}{2}$ but greater than $u$. From result on gap in prime \cite{baker2001difference}, there will exist a prime number $p$ in   the interval $[u, u + \mathcal{O}(u^\theta)]$ where $\theta = 0.525$. Now let $v = p - u  $, hence $v \leq \mathcal{O}(u^\theta)$. Consequently,  $s = p - v + \delta$ and $k = p - v - \delta$ where $2 \delta = s- k$. Now, if $v \leq \delta$, then it becomes the case of $d= (q-e)(q+f)$ where $e= v + \delta$ and $f= v - \zeta$, in which case we get $p+1$ many $\beta$-AMUB with $\beta = 1+ \frac{\delta}{\sqrt{d}} +\mathcal{O}(d^{-1}) \leq 2$. On the other hand if $v \geq \delta$, then it becomes the case of $d=(q-e)(q-f)$ where $q= p$, $e= v + \delta$, and $f= v - \delta$ in which case we get $p$ many $\beta$-AMUB with $\beta = 1+ \frac{e+f}{2\sqrt{d}} +\mathcal{O}(d^{-1})$. And as shown in Lemma \ref{lem:e+f}, we have $\frac{e+f}{2}= \delta + f = \delta + \mathcal{O}(d^{\frac{\theta}{2}})$. Hence $\beta =   1 + \frac{\delta }{\sqrt{d}} + \mathcal{O}(d^{-\lambda}) \leq 2 $ where $\lambda = \frac{1-\theta}{2}= 0.2375$.
  
Thus, considering both the cases together, we get $p+1$ or $p$ many $\beta$-AMUB with $\beta \leq   1 + \frac{\delta }{\sqrt{d}} + \mathcal{O}(d^{-\lambda}) \leq 2 $. Since for $d = (p-e)(p+f)$ or $d = (p-e)(p-f)$, with $e\geq f$, we have $d = p^2-(e\pm f)p \mp ef \Rightarrow p = \mathcal{O}(\sqrt{d}) $, hence number of $\beta$-AMUB is $\mathcal{O}(\sqrt{d})$. The sparsity $\epsilon$ is bounded from below by $ 1- \frac{s}{d}$ and bounded from above by $1- \frac{1}{q}$  hence  $\epsilon =1- \mathcal{O}(d^{-\frac{1}{2}}) $.
\end{proof}

Note that, the number of $\beta$-AMUBs will always be $\geq \lfloor d^\frac{1}{2}\rfloor$.  It can also be verified that, if we assume the validity of Cram\'er Conjecture \cite{cramer1936order}, the above result will hold, but now the asymptotic series for $\beta$ for the case of $d=(q-e)(q-f)$ will be $\beta = 1 + \frac{\delta }{\sqrt{d}} + \mathcal{O}((\log^2 d) d^{-\frac{1}{2} }) \leq 2$. Thus if $\delta$ is bounded then we have $\beta = 1 +  \mathcal{O}( d^{-\frac{1}{2} })$.

Let us now focus on using RBD, having constant block sizes. Such RBD has the advantage of using a single Hadamard matrix for the entire construction. Hence, RBD$(X, A)$ are more amenable for ARMUB construction. Whereas RBD$(X, A)$ with non-constant block size, intersection number $\mu$ is generally small and easy to construct, but then different order Hadamard matrices are required for constructing AMUB. And since real Hadamard matrix exist for order $2$ or multiple of $4$, thus getting ARMUB using such RBD having variable block sizes may not be possible. But our experience has shown that in general, constructing  RBD$(X, A)$, having a large number of parallel classes, and having constant block size for all the parallel classes such that intersection number $\mu$ remains small and bounded are difficult to achieve. In this direction, we present a few constructions of AMUB/ARMUB through RBD having constant block size.

\subsection{Construction of $\beta$-AMUB through RBD having constant block size}

In \cite[Theorem 2]{ak23}, it is shown that for any composite dimension $d = k \times s,\,  k\leq s$ such that $\sqrt{\frac{s}{k}} < 2 $, one can construct at least $N(s) +1$ APMUBs. Here, the key idea is the use of MOLS(s) for the construction of RBD$(X, A)$, having $N(s)+1$ many parallel classes with $\mu =1 $. This also enables us to construct ARMUB using the real Hadamard matrix of order $k$ if it exists. However, if only the Hadamard matrix of order $s$ exists and not $k$,  then to construct ARMUBs, the RBD$(X, A)$ should have block size  $s$. Nevertheless, from \cite[Lemma 2]{ak23}, we know that in such a situation, $\mu \geq 2$. Hence, the minimum value of $\mu$ would be 2 in such a situation. Also, if $N(k) > N(s)$ and we wish to use the MOLS$(s)$ to get RBD with more number of parallel classes, then in such situation we would like to have  RBD, with a block of size $s$.

Our following construction achieves the minimal possible value of $\mu = 2$ when block size $0< k < 2s$. We express $k = s+ f$, $0< f\leq s$.  Let us demonstrate the construction by explicitly constructing RBD$(X,A)$ with $|X|= 5(5+3)=10$, here $q=5$ and $f=3$. As in  \cite[Section 5.1]{ak23}, $4$- MOLS($5$) was used to get RBD$(\bar{X}, \bar{A})$ having $6$ parallel classes. Following the steps of \cite[Construction 3]{ak23} for $d=(q-e)q$, where $e= q-f$, we will use this, to construct RBD$(\widetilde{X},\widetilde{A})$  with $|\widetilde{X}| = (5-2)5=15$. For this let, $\widetilde{X} = \{26,27,28,29,30,31,32,33,34,35,36,37,38,39,40 \}$. Now, we will combine this design of RBD$(\bar{X}, \bar{A} )$. We will get RBD(X, A), with elements numbered 1 to 40. Explicitly RBD$(\widetilde{X},\widetilde{A})$  with $|\widetilde{X}| = 5(5-2)=15 $ is as follows.

\small
\begin{equation*}
\widetilde{P}_1 = \begin{bmatrix}
     \widetilde{b}^1_5=  \{ 26& 32   &38   \}  \\
     \widetilde{b}^1_4=  \{ 27& 33   &39   \}  \\
     \widetilde{b}^1_3=  \{ 28& 34   &40   \} \\
     \widetilde{b}^1_2=  \{ 29& 35 &36   \} \\
     \widetilde{b}^1_1= \{ 30& 31   &37    \}\\
 \end{bmatrix},
\widetilde{P}_2 = \begin{bmatrix}
     \widetilde{b}^2_5= \{26&33  &40   \}\\
     \widetilde{b}^2_4= \{ 27&34  &36   \} \\
     \widetilde{b}^2_3= \{28&35 &37 \}\\
     \widetilde{b}^2_2= \{ 29 &31  &38  \}\\
     \widetilde{b}^2_1= \{30&32   &39 \}\\
 \end{bmatrix},
 \widetilde{P}_3 = \begin{bmatrix}
    \acute{ b}^3_5= \{ 26&34   &37  \}\\
    \acute{ b}^3_4= \{ 27&35 &38  \}  \\
    \widetilde{b}^3_3= \{  28&31   &39   \}\\
    \widetilde{b}^3_2= \{  29&32   & 40\} \\
    \widetilde{b}^3_1= \{  30&33   &36  \} \\
\end{bmatrix},
\end{equation*}
\begin{equation*}
\widetilde{P}_4 = \begin{bmatrix}
     \widetilde{b}^4_5=\{ 26&35 &39\} \\
     \widetilde{b}^4_4=\{ 27&31  &40 \}   \\
     \widetilde{b}^4_3=\{ 28&32  &36\}  \\
     \widetilde{b}^4_2=\{ 29 &33 &37  \} \\
     \widetilde{b}^4_1=\{  30&34 &38  \} \\
\end{bmatrix},
\widetilde{P}_5 = \begin{bmatrix}
   \widetilde{b}^5_5= \{26&31&36\} \\
   \widetilde{b}^5_4=\{  27&32 &37   \} \\
   \widetilde{b}^5_3=\{  28&33  &38 \} \\
   \widetilde{b}^5_2= \{ 29 &34  &39 \} \\
   \widetilde{b}^5_1=\{    30&35 &40 \}  \\
 \end{bmatrix},
\end{equation*}
\normalsize

Note that, any two blocks from different parallel class of RBD$(\widetilde{X},\widetilde{A})$ has at most one point in common. Now taking corresponding block wise union of RBD$(\widetilde{X},\widetilde{A})$ with RBD$(\bar{X},\bar{A})$, i.e., $b^i_j= \bar{b}^i_j \cup \widetilde{b}^i_j$. Ignore any one of the parallel class of RBD$(\bar{X},\bar{A})$. The resulting RBD$(X,A)$ is given as follows.

\small
\begin{equation*}
P_1 = \begin{bmatrix}
     b^1_5=  \{ 1& 7   &13  &19  &25& 26& 32   &38\}  \\
     b^1_4=  \{ 2& 8   &14  &20  &21 &27& 33   &39 \}  \\
     b^1_3=  \{ 3& 9   &15  &16  &22 &28& 34   &40\} \\
     b^1_2=  \{ 4& 10 &11  &17   &23 &29& 35 &36\} \\
     b^1_1= \{ 5& 6   &12  &18  &24 & 30& 31   &37\}\\
 \end{bmatrix},
P_2 = \begin{bmatrix}
    \check{ b}^2_5= \{1&8  &15  &17  &24 &26&33  &40\}\\
     b^2_4= \{ 2&9  &11  &18  &25 &27&34  &36 \} \\
     b^2_3= \{3&10 &12 &19  &21& 28&35 &37\}\\
     b^2_2= \{ 4 &6  &13 &20  &22 &29 &31  &38\}\\
     b^2_1= \{5&7   &14 &16  &23&30&32   &39\}\\
 \end{bmatrix},
\end{equation*}
\begin{equation*}
P_3 = \begin{bmatrix}
     b^3_5= \{ 1&9   &12 &20 &23 &26&34   &37 \}\\
     b^3_4= \{ 2&10 &13 & 16 &24 &27&35 &38\}  \\
    b^3_3= \{  3&6   &14 &17 &25 &28&31   &39 \}\\
    b^3_2= \{  4&7   &15 &18 &21& 29&32   & 40\} \\
    b^3_1= \{  5&8   &11 &19  &22& 30&33   &36 \} \\
 \end{bmatrix},
P_4 = \begin{bmatrix}
     b^4_5=\{ 1&10 &14&18&22 &26&35 &39 \} \\
     b^4_4=\{ 2&6  &15 &19&23& 27&31  &40\}   \\
     b^4_3=\{ 3&7  &11&20&24& 28&32  &36 \}  \\
     b^4_2=\{ 4 &8 &12 &16 &25 &29 &33 &37\} \\
     b^4_1=\{  5&9 &13 &17 &21&30&34 &38 \} \\
 \end{bmatrix},
\end{equation*}
\begin{equation*}
P_5 = \begin{bmatrix}
   b^5_5= \{1&6&11&16&21& 26&31&36 \} \\
  \check{ b}^5_4=\{  2&7 &12 &17 &22 & 27&32 &37 \} \\
   b^5_3=\{  3&8  &13&18 &23&28&33  &38 \} \\
   b^5_2= \{ 4 &9  &14 &19 &24 &29 &34  &39\} \\
   b^5_1=\{    5&10 &15 &20 &25 &30&35 &40 \}  \\
\end{bmatrix}.
\end{equation*}
\normalsize

It can be seen that in the above RBD$(X,A)$, any two blocks from different parallel class have either one or two points in common. $X$ consist of points from $1$ to $40$, and $A =\{P_1 \cup P_2 \cup P_3 \cup P_4 \cup P_5\}$.
The technique is more formally explained for the general case in the following Construction.

\begin{cons}
\label{cons:(s+f)s}
\normalfont
Let $d= k\times s = (s+f)s$, with $0< f\leq s$
\begin{enumerate}

\item Using \cite[Construction 1]{ak23}, construct an $RBD(\bar{X},\bar{A})$ using $w$-MOLS($s$).  The resulting RBD$(\bar{X},\bar{A})$ has $\bar{X} = \{1,2,\ldots,s^2\}$ and $|A| = s(N(s)+2)$ many blocks. It will have $r = N(s)+2$ many parallel classes, namely $\left\{\bar{P}_1,\bar{P}_2,\ldots,\bar{P}_{N(s)}, \bar{P}_0,\bar{P}_\infty \right\}$, each having $s$ many blocks of constant size $s$. The blocks of the parallel class $\bar{P}_l$ are denoted by $\bar{b}^l_i, i = 1,2,\ldots s $. Between any two blocks from different parallel class, have exactly one point in common i.e., $|\bar{b}^l_i \cap \bar{b}^m_j| = 1\,\, \forall\hspace{1mm}l \neq m$.

\item Pick any parallel class, say $\bar{P}_{1}$. Remove $(s-f)$ many blocks from it and let $S = \{ b^1_1 \cup b^1_2\cup \ldots \cup b^1_{(s-f)}\}$.

\item Remove all the points of $S$ from $\bar{X}$ i.e., $\bar{X} \backslash S$ and also remove the points of $S$ from all the blocks of parallel classes  $\{\bar{P}_2, \bar{P}_3, \ldots,\bar{P}_{s+1} \}$. Let the resulting parallel classes be called as $\{ \widetilde{P}_2, \widetilde{P}_3,\ldots,\widetilde{P}_{q+1}\} $ i.e., $\widetilde{P}_i = \bar{P}_i \backslash S$

\item Discard the parallel class $\bar{P}_1$.

\item Construct another $RBD(X,A)$ having elements from $(\bar{X}, \bar{A})$. Then $ |X| = s^2$ and $|A| = s(N(s)+2)$ blocks with $r= N(s)+2$ many parallel classes $\{P_1,P_2,\ldots,P_{r}\}$.

\item Discard any one parallel class from $A_1$, say $P_1$.

\item Form a new design $(X,A)$ where $X = \tilde{X} \cup X$, $A=\{P_2,P_3,\ldots, P_r \} $ where $P_l = P_l + \widetilde{P}_l$. Then $(X,A)$ is the required RBD.
\end{enumerate}
\end{cons}

We claim that the above design $(X, A)$ is an RBD such that $|X| = s(s+f)$ and $A$ consists of $N(s)+1$ many parallel classes, $A = \{P_2,\ldots, P_r \}$, each parallel class have $s$ many blocks, $P_l = \{ b^l_1, b^l_2,\ldots,b^l_s\}, l = 2,3,\ldots,r$, each of size $(s+f)$ i.e., $|b^l_i | = (s+f) \,\, \forall\hspace{1mm}l,i$ such that blocks from different parallel classes have at most two points in common, i.e., $ 1 \leq  |b^l_i \cap b^m_j | \leq 2\,\, \forall \, i \neq j $. We formalize this using the following lemma.

\begin{lemma}
\label{lem:(s+f)s}
Let $d=(s+f)s$ with $0< f\leq s$, then one can construct an RBD$(X,A)$, with $|X|= d$ having constant block size of $(s+f)$ with $\mu = 2$ and having $N(s)+1$ many parallel classes, where $N(s)$ is the number of $MOLS(s)$.
\end{lemma}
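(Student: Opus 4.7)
The plan is to verify that Construction~\ref{cons:(s+f)s} produces the claimed RBD by checking each defining property in turn. The key structural observation is that $X$ is the disjoint union of $\bar{X}$ (with $|\bar{X}|=s^2$) and $\widetilde{X}$ (with $|\widetilde{X}|=sf$), and that each block is by design the union $b^l_i=\bar{b}^l_i\cup\widetilde{b}^l_i$ of corresponding blocks from two component designs living on these disjoint point sets. This layered structure reduces every property of $(X,A)$ to the corresponding properties of $(\bar{X},\bar{A})$ (obtained from $w$-MOLS$(s)$, giving block size $s$, $\mu=1$, and $N(s)+2$ parallel classes) and $(\widetilde{X},\widetilde{A})$ (obtained via \cite[Construction 3]{ak23} applied to the product $(s-(s-f))\cdot s=sf$, giving block size $f$ and $\mu=1$, with enough parallel classes to be paired with the surviving classes of $\bar{A}$).

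First I would verify the point count: $|X|=|\bar{X}|+|\widetilde{X}|=s^2+sf=s(s+f)=d$. Next, since $\bar{X}\cap\widetilde{X}=\emptyset$, every block satisfies $|b^l_i|=|\bar{b}^l_i|+|\widetilde{b}^l_i|=s+f$, so all blocks have the constant size $s+f$. For the parallel class property, fix an index $l$; then $\{\bar{b}^l_i\}_{i=1}^{s}$ partitions $\bar{X}$ and $\{\widetilde{b}^l_i\}_{i=1}^{s}$ partitions $\widetilde{X}$, so $\{b^l_i\}_{i=1}^{s}$ partitions $X$. The number of parallel classes in $A$ is $N(s)+1$, reflecting the single parallel class of $\bar{A}$ discarded in the construction.

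The remaining task is to pin down the intersection number. For the upper bound $\mu\leq 2$, take any two blocks $b^l_i$ and $b^m_j$ with $l\neq m$; by disjointness of $\bar{X}$ and $\widetilde{X}$,
$$|b^l_i\cap b^m_j|=|\bar{b}^l_i\cap\bar{b}^m_j|+|\widetilde{b}^l_i\cap\widetilde{b}^m_j|\leq 1+1=2,$$
where each summand is at most $1$ because both component designs inherit $\mu=1$ from their MOLS-based constructions. For the matching lower bound, I would invoke \cite[Lemma 2]{ak23}, which gives $\mu\geq\lceil k/s\rceil=\lceil(s+f)/s\rceil=2$ whenever $0<f\leq s$. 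Combining these yields $\mu=2$ exactly. The main obstacle is not any single step but the careful bookkeeping needed to align the parallel class indices across the two component designs so that the additivity of intersection sizes can be exploited cleanly; this alignment is exactly what step~7 of Construction~\ref{cons:(s+f)s} arranges by discarding one parallel class of $\bar{A}$ to match the count of parallel classes available in $\widetilde{A}$.
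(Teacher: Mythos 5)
Your proposal is correct and follows essentially the same route as the paper's proof: both rest on the disjoint decomposition $X=\bar{X}\cup\widetilde{X}$, the blockwise union $b^l_i=\bar{b}^l_i\cup\widetilde{b}^l_i$ giving constant size $s+f$, and additivity of intersection sizes over the two layers (exactly $1$ in the $\bar{X}$ layer, at most $1$ in the $\widetilde{X}$ layer) to get $\mu\leq 2$. The only difference is cosmetic: you obtain the matching lower bound from \cite[Lemma 2]{ak23}, whereas the paper gets it directly from the observation that the intersection is always $1$ or $2$ (with $2$ attained since every point of the nonempty $\widetilde{X}$ lies in a block of each parallel class); both are valid and your version makes the attainment of $\mu=2$ slightly more explicit.
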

\begin{proof}
Refer to Construction \ref{cons:(s+f)s}. Since any pair of blocks from different parallel classes is of size $s$ and has exactly one point in common in RBD$(\bar{X},\bar{A})$, i.e., $|\bar{b}^l_i \cap \bar{b}^m_j| = 1, \,\forall , l \neq m$, removing the elements of $S = \left\{\bar{b}^1_1 \cup \bar{b}^1_2\cup \ldots \cup \bar{b}^1_{s-f} \right\}$ from the entire design will remove exactly $(s-f)$ elements from each block $\bar{b}^l_t,\, l\neq 1$. Hence, the blocks $\tilde{b}^l_t= \bar{b}^l_t \backslash S$ will be of constant size $|\tilde{b}^l_i| = f$ and $ |\tilde{b}^l_i \cap \tilde{b}^m_j| \leq 1,, \forall, l \neq m$.

On the other hand, in an RBD$(X, A)$, any pair of blocks from different parallel classes is of size $s$ and has exactly one point in common, i.e., $ |b^l_i \cap b^m_j| = 1,\, \forall , l \neq m $. Since the design $(X, A)$, where $X = \tilde{X} \cup \bar{X}$, $A=\left\{P_2, P_3,\ldots, P_r\right\}$, and $P_l = P_l + \widetilde{P}_l$, is a direct union of the blocks for $\bar{A}$ and $\tilde{A}$, it will have either one point or two points in common between blocks of different parallel classes.
\end{proof}
 
Now using such RBD, we can construct $\beta$-AMUBs with the following characteristics. 

\begin{theorem}
\label{thm:d=s(s+f)}
If $d=s(s+f) $, with $s, f \in \mathbb{N}$ and $f \leq s$, then one can construct $N(s)+1$ many approximate MUBs with $\beta = 2 \sqrt{\frac{s}{s+f}} = 2 - \frac{ f}{\sqrt{d}}+\mathcal{O}(d^{-1}) \leq 2$ and sparsity $\epsilon = 1-\frac{1}{s}$. If there exist a real Hadamard matrix of order $(s+f)$, then one can construct $N(s)+1$ many approximate real MUBs (ARMUBs) with the same $\beta$ and $\epsilon$. Furthermore, $\Delta = \left\{ 0, \frac{\beta_o}{\sqrt{d}}, \frac{2\beta_o}{\sqrt{d}} \right\}$ where $\frac{1}{\sqrt{2}} \leq \beta_o= 1 - \frac{ f}{2 \sqrt{d}}+\mathcal{O}(d^{-1})  < 1$.
\end{theorem}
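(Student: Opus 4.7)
The plan is to combine the RBD produced by Lemma \ref{lem:(s+f)s} with the general AMUB result for constant block size, Theorem \ref{thm:d=ks}. Lemma \ref{lem:(s+f)s} already supplies an RBD$(X,A)$ with $|X|=d=s(s+f)$, constant block size $s+f$, intersection number $\mu=2$, and exactly $N(s)+1$ parallel classes, which is precisely the input that Theorem \ref{thm:d=ks} requires.

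To invoke Theorem \ref{thm:d=ks} I would identify its pair $(k,s)$ with our $(s+f,\,s)$ (block size, number of blocks per class). Then its $2\delta=s-(s+f)=-f$, so $|\delta|=f/2$, and the hypothesis $\delta\leq\sqrt{d}$ holds because $f\leq s\leq\sqrt{d}$ (using $d=s(s+f)\geq s^2$). Theorem \ref{thm:d=ks} then delivers $N(s)+1$ many $\beta$-AMUBs with $\beta=\mu\sqrt{s/(s+f)}=2\sqrt{s/(s+f)}$ and sparsity $\epsilon=1-1/s$. For the real case, the same theorem, fed with a real Hadamard matrix of order $s+f$, produces $N(s)+1$ ARMUBs with identical parameters, since the choice of Hadamard matrix does not affect $\beta$ or $\epsilon$.

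For the asymptotic expansion, I would solve $s^{2}+sf=d$ to obtain $s=\tfrac12(\sqrt{f^{2}+4d}-f)$, whence
\[
\frac{s}{s+f}=1-\frac{f}{\sqrt{d}}+\mathcal{O}(d^{-1}),
\]
and a single square root gives $\beta=2-f/\sqrt{d}+\mathcal{O}(d^{-1})\leq 2$. For the description of $\Delta$, Theorem \ref{thm:d=ks} yields $\Delta=\{0,\tfrac{1}{s+f},\tfrac{2}{s+f}\}$ because $\mu=2$ and the block size is $s+f$. Writing $\tfrac{1}{s+f}=\beta_o/\sqrt{d}$ forces $\beta_o=\sqrt{d}/(s+f)=\sqrt{s/(s+f)}$, which is consistent with $\beta=2\beta_o$ and expands as $\beta_o=1-f/(2\sqrt{d})+\mathcal{O}(d^{-1})$. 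The strict upper bound $\beta_o<1$ follows from $f\geq 1$, and the lower bound $\beta_o\geq 1/\sqrt{2}$ is equivalent to $s\geq f$, which is precisely the standing hypothesis.

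There is no real obstacle left in the present statement: the combinatorial work lives in Lemma \ref{lem:(s+f)s}, and the analytic content is packaged in Theorem \ref{thm:d=ks}. The only care needed is the correct identification of $(k,s)$ in the theorem (since here the block size exceeds the number of blocks per class, reversing the roles compared to the APMUB setting of \cite{ak23}), and the verification of the sharp bound $\beta_o\geq 1/\sqrt{2}$ from $s\geq f$.
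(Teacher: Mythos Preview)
Your proposal is correct and follows essentially the same route as the paper: invoke Lemma~\ref{lem:(s+f)s} (equivalently Construction~\ref{cons:(s+f)s}) to obtain the RBD with block size $s+f$, $\mu=2$, and $N(s)+1$ parallel classes, then feed it into Theorem~\ref{thm:d=ks} to read off $\beta$, $\epsilon$, and $\Delta$. If anything, you are slightly more careful than the paper in explicitly checking the hypothesis $|\delta|\leq\sqrt{d}$ of Theorem~\ref{thm:d=ks} and in deriving the asymptotic expansion via $s=\tfrac12(\sqrt{f^2+4d}-f)$.
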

\begin{proof}
Following the Construction \ref{cons:(s+f)s}, we construct an RBD$(X,A)$ with $|X|= d=(s+f)s$. The block size is $(s+f)$, and the number of parallel classes is $N(s)+1$. Since the intersection number $\mu=2$, we have $\beta= \frac{2\sqrt{d}}{s+f}=2 \sqrt{\frac{s}{s+f}} < 2$. The result follows from the construction of $\beta$-AMUBs in \cite{ak22} and Theorem \ref{thm:d=ks}. The minimum possible $\beta$ in this situation is when $f= s$, for which $\beta \leq \sqrt{2}$. The asymptotic variation of the parameters in terms of $d$ is given by $\beta=2-\frac{f}{\sqrt{d}}+ \mathcal{O}(d^{-1})$. However, here, $\beta$ converges to 2. The sparsity is given by $\epsilon = 1- \frac{s+f}{d} = 1 - \frac{1}{s}$. Using the real Hadamard matrix of order $(s+f)$, we obtain the ARMUBs with the same $\beta$ and $\epsilon$. However, the set $\Delta$, which contains all the different values of the absolute value of dot product $|\braket{\psi^l_i|\psi^m_j}| $ of vectors $\ket{\psi^l_i}$ and $\ket{\psi^m_j}$ from different bases, is restricted to set $\{ 0,\frac{1}{s+f}, \frac{2}{s+f} \}$. Hence, $\Delta = \left\{0, \frac{\beta_o}{\sqrt{d}}, \frac{ 2\beta_o}{\sqrt{d}}\right\}$ where $\beta_o =\sqrt{ \frac{s}{s+f}} $ hence $\frac{1}{\sqrt{2}} \leq \beta_o= 1 - \frac{ f}{2 \sqrt{d}}+\mathcal{O}(d^{-1})  < 1$, where the lower bound correspond to the situation when $f= s$.
\end{proof}

When $s = q$, some power of a prime number, there is well-known method of construction of affine resolvable $(q^2,q,1)$-BIBD which is an RBD. In this regard, we have the following immediate corollary.
 
\begin{corollary}
\label{cor:q(q+f)}
If $d=q(q+f) $, where $q$ is some power of a prime and $q, f \in \mathbb{N}$ such that $f \leq q$, then we can construct $q$ many AMUB with $\beta =2 \sqrt{\frac{q}{q+f}}= 2 - \frac{ f}{\sqrt{d}}+ \mathcal{O}(d^{-1})$. Moreover, if there exist a real Hadamard matrix of order $(q+f)$, one can construct $q$ many approximate real MUBs (ARMUBs) with the same parameters. 
\end{corollary}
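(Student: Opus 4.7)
The plan is to obtain Corollary~\ref{cor:q(q+f)} as a direct specialization of Theorem~\ref{thm:d=s(s+f)}. The only ingredient needed beyond the theorem itself is the classical fact that, when $s = q$ is a power of a prime, the number of mutually orthogonal Latin squares of order $q$ attains the maximal value $N(q) = q-1$. This is exactly the regime in which the underlying affine resolvable $(q^2, q, 1)$-BIBD used inside Construction~\ref{cons:(s+f)s} is available in its complete form.

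First I would set $s = q$ in Theorem~\ref{thm:d=s(s+f)}. The hypothesis of the theorem requires $s, f \in \mathbb{N}$ with $f \leq s$, which is precisely the hypothesis $f \leq q$ given in the corollary. Invoking the theorem yields $N(s) + 1 = N(q) + 1 = (q-1) + 1 = q$ many $\beta$-AMUBs. The value of $\beta$ inherits directly: $\beta = 2\sqrt{\tfrac{s}{s+f}} = 2\sqrt{\tfrac{q}{q+f}}$, and its asymptotic expansion in terms of $d = q(q+f)$ reads $\beta = 2 - \tfrac{f}{\sqrt{d}} + \mathcal{O}(d^{-1})$, exactly as in the theorem. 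The sparsity $\epsilon = 1 - \tfrac{1}{s} = 1 - \tfrac{1}{q}$ similarly carries over verbatim.

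For the real part, observe that Theorem~\ref{thm:d=s(s+f)} asserts that if a real Hadamard matrix of order $(s+f)$ exists, then the same construction produces $N(s)+1$ many $\beta$-ARMUBs with identical $\beta$ and $\epsilon$. Substituting $s = q$, the existence of a real Hadamard matrix of order $(q+f)$ thus delivers $q$ many ARMUBs with the stated parameters. One may also remark that the set $\Delta$ inherited from the theorem remains $\{0, \tfrac{\beta_o}{\sqrt{d}}, \tfrac{2\beta_o}{\sqrt{d}}\}$ with $\beta_o = \sqrt{\tfrac{q}{q+f}}$, though this refinement is not explicitly demanded in the corollary statement.

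There is essentially no obstacle here beyond recognizing that the prime-power assumption on $q$ is precisely what activates the extremal value $N(q) = q-1$, and that the affine resolvable $(q^2, q, 1)$-BIBD used as the scaffold inside Construction~\ref{cons:(s+f)s} is constructible whenever $q$ is a prime power. The remaining content of the proof is a one-line substitution into the general theorem, so the write-up would be very brief and mostly a pointer to Theorem~\ref{thm:d=s(s+f)} together with the standard MOLS fact.
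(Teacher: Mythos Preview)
Your proposal is correct and takes essentially the same approach as the paper: the paper's entire proof is the single sentence ``The proof of this corollary follows directly from the fact that $N(q) = q-1$,'' which is exactly your specialization of Theorem~\ref{thm:d=s(s+f)} at $s=q$ combined with the standard MOLS count for prime powers.
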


The proof of this corollary follows directly from the fact that $N(q) = q-1$. For example, the RBD$(X,A)$ constructed above having  $5$ parallel classes, can be converted into $5$ orthonormal bases, which gives $\beta = 2\sqrt{\frac{5}{8}} =  1.58 < 2$ and $\epsilon = 1 - \frac{1}{5}= 0.8$. 

The corollary above is a generalization of the result for $d= q(q+1)$ \cite[Theorem 4]{ak22} to $q(q+f)$, having a similar parameter, $\beta = 2 - \mathcal{O}(d)$. In fact, the result of \cite[Theorem 4]{ak22} is a particular case of our present result, where the block size is larger than the number of blocks, with $e= 0, f= 1$, and $\mu = 2$. There are $q+1$ many parallel classes in an RBD, each having a constant block size of $(q+1)$. In that case, $\beta = 2 \sqrt{\frac{q}{q+1}} = 2 - \mathcal{O}(\frac{1}{\sqrt{d}})$, which is the same as \cite[Theorem 4]{ak22}. Once again, we would like to point out that these are not APMUBs but provide results of the same quality in terms of {absolute inner product} values as \cite{ak22}, but over a {larger class}. In order to obtain APMUBs, we must have $\mu=1$.

The above Theorem \ref{thm:d=s(s+f)} and  \cite[Theorem 2]{ak23}, together give the following important corollary, which enables us to construct $\beta$-ARMUBs, such that $\beta< 2$, for every $d= k\times s, \, k\leq s$ such that $s-k < \sqrt{d}$ and there exist a real Hadamard matrix of order $k$ or $s$. The quality of the constructed $\beta$-ARMUB depends on the factors of $d$ and $|s-k|$.

\begin{corollary}
\label{cor:d=ks}
Let $d= k\times s$, with $|s-k|< \sqrt{d}$. If a real Hadamard matrix of order $k$ exists, then one can construct $N(s)+1$ many $\beta$-ARMUB, with sparsity $\epsilon = 1- \frac{1}{k}$. If $k< s$ then  $\beta = \sqrt{\frac{s}{k}} = 1 + \frac{\delta}{\sqrt{d}} + \mathcal{O}(\frac{\delta^2}{d} ) < 2$, and if $k> s$ then $\beta =  2 \sqrt{\frac{s}{k}} = 2 - \frac{\delta}{\sqrt{d}} +  \mathcal{O}(\frac{\delta^2}{d} )< 2$, and, if $k=s$, then $\beta = 1$  where $2\delta =|s-k|$.
\end{corollary}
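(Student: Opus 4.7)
The plan is to split the argument into three cases according to the comparison of $k$ and $s$, and in each case to invoke a construction already proved earlier in the paper. For the degenerate case $k = s$, $d = k^2$ is a perfect square and the affine resolvable $(k^2,k,1)$-BIBD combined with a real Hadamard matrix of order $k$ yields an honest MUB via \cite[Corollary 3]{ak22}, so $\beta = 1$. For $k < s$ I would invoke \cite[Theorem 2]{ak23}, which produces an RBD$(X,A)$ with $|X|=d$, constant block size $k$, intersection number $\mu = 1$, and $N(s)+1$ parallel classes (built from MOLS$(s)$ via \cite[Construction 1]{ak23}). Plugging this RBD into \cite[Construction 1]{ak22} with the assumed real Hadamard matrix of order $k$ as local unitary, Theorem \ref{thm:d=ks} with $\mu = 1$ immediately gives $N(s)+1$ many $\beta$-ARMUBs with $\beta = \sqrt{s/k}$ and the stated sparsity. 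For $k > s$ the corresponding input is Theorem \ref{thm:d=s(s+f)}, applied with $f = k - s$ so that $s+f = k$; Construction \ref{cons:(s+f)s} then delivers an RBD with constant block size $k$, $\mu = 2$, and $N(s)+1$ parallel classes, and the real Hadamard matrix of order $k$ promotes the output to ARMUBs with $\beta = 2\sqrt{s/k}$.

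Next I would derive the asymptotic expansions by parametrizing $k = \sqrt{d+\delta^2} \mp \delta$, $s = \sqrt{d+\delta^2} \pm \delta$ with the sign fixed by the case, so that $2\delta = |s-k|$ and $ks = d$. Expanding $\sqrt{s/k}$ as a power series in $\delta/\sqrt{d}$ around $0$ gives the stated leading-order forms in each branch. To secure the uniform bound $\beta < 2$, the hypothesis $|s-k| < \sqrt{d}$ enters as follows: when $k > s$, $\beta = 2\sqrt{s/k} < 2$ is automatic since $s < k$; when $k < s$, squaring $|s-k| < \sqrt{ks}$ yields $(s-k)^2 < ks$, whence $s/k < (3+\sqrt{5})/2 < 4$, so $\beta = \sqrt{s/k} < 2$.

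The main obstacle here is not really an obstacle at all: the heavy combinatorial and analytic work is already packaged inside Theorem \ref{thm:d=ks}, Theorem \ref{thm:d=s(s+f)}, and \cite[Theorem 2]{ak23}, so the task reduces to a case split together with an elementary asymptotic expansion. The one point requiring genuine care is ensuring that the block size chosen in each branch matches the order of the available Hadamard matrix; this is precisely why the construction in the $k>s$ direction is arranged (via $f = k-s$) to use block size $k$ rather than $s$, so that the single assumed Hadamard matrix of order $k$ suffices uniformly in both regimes.
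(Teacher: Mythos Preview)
Your proposal is correct and follows essentially the same route as the paper's own proof: a case split on the sign of $s-k$, invoking \cite[Theorem 2]{ak23} (the MOLS$(s)$-based RBD with block size $k$ and $\mu=1$) for $k<s$ and Theorem~\ref{thm:d=s(s+f)} with $f=k-s$ for $k>s$, together with the real Hadamard matrix of order $k$. You supply more detail than the paper does (the explicit asymptotic expansion and the verification that $|s-k|<\sqrt{d}$ forces $\beta<2$ in the $k<s$ branch), but the underlying argument is the same.
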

\begin{proof}
When we have a Hadamard matrix of order $k$, and if $k< s$ then we employ the construction corresponding to $d= (s-e)s$, \cite[Construction 3]{ak23}, with $k =s-e$, which will result into  $N(s)+1$ many $\beta$-AMUB, with $\beta = \sqrt{\frac{s}{k}}$ and $\epsilon = 1- \frac{1}{s}$ as stated in \cite[Theorem 2]{ak23}. On the other hand when $k>s$ then we will employee above Construction \ref{cons:(s+f)s} ~for $d= s(s+f)$  to construct $N(s)+1 $ many $\beta$-AMUB with $\beta = 2 \sqrt{\frac{s}{s+f}}  = 2 - \frac{\delta}{\sqrt{d}} +  \mathcal{O}(\frac{\delta^2}{d} )< 2$.
\end{proof}

The quality of the constructed AMUB for $d = k \times s$ depends on $\delta = |s - k| < \sqrt{d}$, and the smaller the $\delta$, the closer $\beta$ becomes to $1$. However, note that the number of AMUB is only of the order of $N(k)$ or $N(s)$, which is generally small. Nevertheless note that $N(w) \rightarrow \infty $ as $w\rightarrow \infty$ whereas the number or real MUBs for most of the non square dimension is either 2 or 3.

\section{Discussion and comparison with existing results on AMUB} 
\label{sec:5}

We have shown that for a composite $d= k\times s$, if $|s-k| <\sqrt{d}$, then RBD can be used to constructed $\geq \sqrt{d}$ many very sparse  $\beta$-AMUB, with $\beta \leq 2$ for all such composite $d$. This is to be compared with the fact that corresponding to such composite $ d= k \times s = p_{n_1}^{n_1}\, p_{n_2}^{n_2}\, \ldots p_{n_m}^{n_m} $, number of MUB possible is $p_{n_r}^{n_r}+1$ where $p_{n_r}^{n_r} $ is min$\{ p_{n_1}^{n_1}, p_{n_2}^{n_2},\ldots p_{n_m}^{n_m}\}$. Thus  number of  $\beta$-AMUB will always be greater than MUBs for such composite $d$.  

In order to constructing AMUB, for a such  composite $d $, express $d= (q-e)(q+f)$ or $(q-e)(q-f)$, where  $q\geq \frac{|s+k|}{2}$ is some prime-power closest to $\frac{|s+k|}{2}$ but greater than it. Then construct RBD, whose block sizes are $\mathcal{O}(\sqrt{d}$, with $\mu =1$. The most important parameter which control the quality of AMUB, measured by closeness of $\beta$ to 1, is $\frac{|s-k|}{2}$. The order of set $\Delta$, which consist of different possible values of $|\braket{\psi^l_i|\psi^m_j}|$, where $\ket{\psi^l_i}$ and $\ket{\psi^m_j}$ are basis vectors from different bases  is $\mathcal{O}(f^2)$.  hence for small $f$, hence we get only a few different values of $|\braket{\psi^l_i|\psi^m_j}|$. And for the case of $d= (q-e)(q+f)$, the $0 \leq f  \leq \delta $ and for the case of $d= (q-e)(q-f)$ the $0 \leq f = \mathcal{O}(d^{\frac{\theta}{2} }) $. Thus smaller the $\delta$, the $\beta$ will be closer to 1 and $|\Delta |$ will be small. And when $\delta = 0 $ i.e., $d= q^2$, we get $q+1$ many  MUBs.  For example for $d= 6\times 10$ we can construct 10 $\beta$-AMUB with $\beta =1.29 $ where as for same $d$ we have only 4 MUBs. And for $d= 6\times 7$ we can construct 8 $\beta$-AMUB with $\beta =1.08 $ where as for same $d$ we have only 3 MUBs. Note that smaller the $\delta$, closer is the $\beta $ to 1

The RBD having constant block size is particularly useful for constructing $\beta$-ARMUB. We have shown that for a composite $d=k\times s$ with $|s-k| < \sqrt{d}$, such that a real Hadamard matrix of order $k$ or $s$ is available, then we can construct $N(s)+1$ or $N(k)+1$ many $\beta$-ARMUB with $\beta < 2$ respectively. For example for $d= 4\times 7 $ we can construct 7 $\beta$-ARMUB with $\beta =1.32 $ where as only 2 real MUB exist \cite[Table 1]{MUB2} in this case. Consider  for $d= 7\times 12 $ we can construct 7 $\beta$-ARMUB with $\beta =1.527 $ where as only 2 real MUB exist in this case as well \cite[Table 1]{MUB2}.

Here we generalize the result for $d= q(q+1)$ of~\cite{ak22} to $q(q+f)$ in Corollary \ref{cor:q(q+f)} having a similar form of $\beta = 2 - \mathcal{O}(d^{-\frac{1}{2}})$. We also improve the number of $\beta$-AMUBs for the case of $(q-e)(q+f)$ where previously only $\lfloor \frac{q-e}{f} \rfloor +1$ many $\beta$-AMUBs could be constructed. However, now $q$ many $\beta$-AMUBs could be constructed with similar $\beta = 1 + \mathcal{O}(d^{-\frac{1}{2}})$. On the other hand, $|\Delta|$ is now increased from 2 to $\mathcal{O}(f^2)$ defined in Definition \ref{def:delta1}. In fact, we generalized the case for $d = (q - e)(q + f), 0 < f \leq f$ for the construction of $\beta$-AMUBs to include the case for $d = (q - e)(q \pm f)$. Thus, for situation like $d= 9\times 10$ or $d= 13\times 15$ etc., we cannot construct APMUB, but can construct $\beta$-AMUB by expressing these $d$ in the form of $(q-e)(q-f)$ for suitable $e$ and $f$.  However, this generalization comes at the expense of increasing $|\Delta|$ to $|\Delta_1|$ (Definition \ref{def:delta1}) or to $|\Delta_2|$ (Definition \ref{def:delta2}), as opposed to the previous scenario of APMUB where $|\Delta| = 2$. 

We make the following observation applicable to various construction of $\beta$-AMUB here, highlighting common characteristics of AMUBs constructed using RBD$(X, A)$.

\begin{enumerate}
 
\item One of the critical parameters of RBD$(X, A)$ is the intersection number, denoted as $\mu$, which is the maximum number of elements common between any pair of blocks from different parallel classes.  Note that, as each parallel class contains all the points of $X$, a block in a parallel class is bound to have at least one point in common with some block of a different parallel class, and therefore $\mu \geq 1$. A lower value of $\mu$ is desirable for a lower value of $\beta$.

\item In general, using RBD$(X,A)$, with $|X|= d = k\times s$ a composite number, we can obtain $\beta$-AMUB (ARMUB) if $\delta = \frac{|s-k|}{2}$ is small. In fact, the smaller the value of $\delta$, the better the quality of the AMUB. For large values, we get poor-quality AMUB (ARMUB).
.
 
\item The other critical parameter is the block size of RBD$(X, A)$. The block sizes should be around $\sqrt{d}$ to obtain good quality AMUB (ARMUB). Closer the block size to $\sqrt{d}$, closer the value of $\beta$ to $1$.
 
\item In general, for composite $d$ with small $\delta$ and a resolvable design having a block size $\mathcal{O}(\sqrt{d})$ with small $\mu$, the $\beta$ of the constructed AMUB is of the form $\beta = \mu(1 \pm \frac{\delta}{\sqrt{d}} +\mathcal{O}(d^{-1}))$. In such a situation, generally, we get $\mathcal{O}(\sqrt{d})$ many APMUB.
 
\item For ARMUB, a Hadamard matrix corresponding to the block sizes of the parallel class should exist. Thus, RBD with block sizes equal to the order of some Hadamard Matrix is sufficient to construct ARMUB. However, since the real Hadamard matrix exists only for order $2$ or orders multiple of $4$, it is easier to construct ARMUB with RBD having a constant block size, a multiple of $4$. 
  
\item Sparsity $(\epsilon)$ of the constructed AMUB depends on the block sizes. The larger the block size, the smaller the sparsity, and vice versa.
 
\item The set of all the different absolute values of the dot product of basis vectors of AMUB, denoted by $|\Delta|$, is dependent on the number of different block sizes of RBD. The more different sizes of blocks in RBD, the larger the size of $|\Delta|$. Hence, RBD having a constant block size is desirable to get a smaller size of the set $|\Delta|$. Furthermore, $|\Delta|$ is also dependent on the value of $\mu$. The larger the value, the larger the size of $|\Delta|$. Hence, a smaller intersection number in RBD is also desirable to get a smaller size of the set $|\Delta|$.
\end{enumerate} 

The best result, applicable for most of the dimensions, for $\beta$-AMUB is based on the elliptic curve construction, \cite[Theorem 2]{AMUB2} where the construction gave $p^{m-1}$, $m\geq 2$ where $p$ is a prime such that $\sqrt{n} -1 \leq \sqrt{p} \leq \sqrt{n} +1.$
$$|\braket{\psi^l_i|\psi^m_j}| \leq \frac{2m + \mathcal{O}(d^{-\frac{1}{2} })}{\sqrt{d}} = \mathcal{O}(n^{-\frac{1}{2}}) \Rightarrow \beta = 2m + \mathcal{O}(d^{-\frac{1}{2} }).$$
Here the smallest value of $\beta = 4 +\mathcal{O}(d^{-\frac{1}{2} }) > 4$,  corresponding to $m=1$. However, here we could provide a construction where $\beta \leq 2$. Thus the $\beta$ is closer to one in all our construction than this. On the other hand, we obtain only $\mathcal{O}(\sqrt{d})$ many AMUBs, whereas \cite[Theorem 2]{AMUB2} can provide $\mathcal{O}(d)$ many AMUB. The other generic construction of AMUB applicable for all $d$ is of Klappenecker et al. \cite{AMUB1} where they gave construction of AMUB that has $\beta = \mathcal{O}(d^{\frac{1}{4}})$  \cite[Theorem 11]{AMUB1} or the construction of AMUB based on the finite field \cite[Theorem 1]{AMUB2} where $\beta = \mathcal{O}(\sqrt{\log d})$. Thus in all the known construction of AMUB for a generic $d$, the $\beta$ constructed using RBD is much closer to 1 than the other known construction. In fact here $\beta \rightarrow 1$ for larger $d$ where as for other it blows up without any bound or tends to a larger values. 

In case of certain specific kinds of $d$, as per our survey, only for $d = q - 1$, where $q$ is some power of prime, there are $d$ or $d+1$ AMUBs \cite{AMUB1,AMUB-MixedCharacterSum} where $\beta = 1+\mathcal{O}(d^{-\lambda})$ for $\lambda > 0$. The other known case of the $\beta$ of this form, for $d = q(q-1)$, the number of AMUBs is $\mathcal{O}(\sqrt{d})$, and for $d = \phi(n)$, the number of AMUBs is equal to the smallest prime division of $n$, which is always less than $\sqrt{n}$ when $n$ is not a prime number \cite{AMUB-FrobeniusRing}. At the same time, we have shown that for all composite $d = k \times s$ when $|s-k| < d^\frac{1}{2}$ then we will get more than $\sqrt{d}$ many AMUB with $\beta = 1+ \mathcal{O}(d^{-\lambda})$ for $\lambda > 0$. Thus, we can effectively construct such AMUBs for a large set of integer $d$. Further, we are also able to construct ARMUB with $\beta = 1+ \mathcal{O}(d^{-\lambda})$ or with $\beta = 1- \mathcal{O}(d^{-\lambda}) < 2$ for such $d$ whenever real Hadamard matrix of order $k$ or order $s$ is available. Moreover, all these AMUBs are very sparse where in general the sparsity $\epsilon = 1- \mathcal{O}(d^{-\frac{1}{2}})$ for both complex and real cases AMUB.

\section{Conclusion}
\label{sec:con}
Here we have analyzed general characteristics of AMUB constructed using combinatorial design techniques, using objects called Resolvable Block Design. We have identified parameters which critically influence the quality of AMUB. We have also shown that, large sets of real and complex class of Approximate MUBs, which we call $\beta$-AMUB can be constructed, in composite dimensions $(d = k\times s)$ with $|s-k| < \sqrt{d}$ using RBD.   In general in composite dimension, where only a very small set of MUBs is known, even in the complex case. 
 
In this work, we derive how $\beta$ depends on the nature of RBD$(X, A)$, with $|X|= d$. To do this, we broadly categorize RBD$(X, A)$ into two categories, one where all the parallel classes have a constant block size and another where they do not have a constant block size. When parallel classes have a constant size, say $k$, in  such RBD$(X, A)$ a single Hadamard matrix of order $k$ can be used to yield AMUBs over $\mathbb{C}^d$ (or $\mathbb{R}^d$), depending on whether complex or real Hadamard matrices are used. Here, for cases where parallel classes do not have a constant block size, one needs to use Hadamard matrices of the order of block sizes.

On the basis of our analysis and construction for various setting we conclude that $|s-k|$, $\mu$ and set of Block sizes $K$ are most critical parameters determining the closeness of AMUB to the MUBs constructed over $\mathbb{C}^d$ (or $\mathbb{R}^d$). When block sizes are near $\sqrt{d}$ and $\mu$ is $1$, we get $\mathcal{O}(\sqrt{d})$ many AMUBs for all $d$ with $2\delta = |s-k|< \sqrt{d}$, the $\beta = 1+ \frac{\delta}{\sqrt{d}} + \mathcal{O}(\frac{\delta^2}{d}) \leq 2$ and $\epsilon = 1- \mathcal{O}(d^{-\frac{1}{2}})$. And whenever a real Hadamard matrix of order $k$ or $s$ are available, we get $N(s)+1$ or $N(k)+1$ many ARMUBs with similar characteristics. 

Our experience suggest that constructing RBD$(X, A)$, having a large number of parallel classes, and having constant block size for all the parallel classes such block sizes remain near $\mathcal{O}(\sqrt{d})$ and the intersection number $\mu$ remains small  are generally difficult to achieve. In general our RBD$(X,A)$ has $\mathcal{O}(\sqrt{d})$ many parallel classes having $\mu = 1$ or $2$. We intend to work on constructing RBD$(X,A)$'s having larger order of parallel classes, keeping $\mu$ small and block sizes  near $\mathcal{O}(\sqrt{d})$, as our results show that such RBD$(X,A)$ will enable one to construct even larger number of AMUBs, having similar quality as in this work. Further we also intend to work on dispensing with the condition $|s-k|< \sqrt{d}$ and making it applicable for all the $d$'s. Even though our experience suggest that this condition is not very restrictive, and ratio of integers which satisfy this condition to the total numbers less than certain finite large integer is almost one. Nevertheless there are infinitely many integers which does not satisfy this condition, hence effort in the direction to dispense with this condition may be worthwhile.

\end{document}